\documentclass[journal]{IEEEtran}
\usepackage{multicol} 

\usepackage{times,amsmath,amssymb,bm,bbm,tabularx,cases,dsfont}
\usepackage{hyperref}
\usepackage[capitalize]{cleveref}
\usepackage{comment}
\usepackage{cite}
\usepackage{amsthm}
\usepackage{textcomp}
\usepackage{stfloats}
\usepackage{mathtools}
\usepackage{algorithm,algpseudocode,setspace,etoolbox}
\usepackage{subfigure}
\usepackage{makecell}
\usepackage{blindtext}


\usepackage{graphicx}
\usepackage{xcolor}
\pdfoutput=1
\usepackage{bmpsize}

\usepackage[T1]{fontenc}

\crefformat{equation}{(#2#1#3)}
\crefrangeformat{equation}{(#3#1#4) --~(#5#2#6)}
\crefmultiformat{equation}{(#2#1#3)}%
{ and~(#2#1#3)}{, (#2#1#3)}{ and~(#2#1#3)}


\usepackage{textcomp}





\crefname{figure}{Fig.}{Fig.}
\crefname{table}{Table}{Table}

\theoremstyle{definition}

\newtheorem{lemma}{Lemma}
\newtheorem{proposition}{Proposition}



\algnewcommand{\Initialize}[1]{%
  \State \textbf{Initialize:}
  \State \hspace*{\algorithmicindent}\parbox[t]{0.8\linewidth}{\raggedright #1}
}
\newcommand{\EE}{{\mathbb{E}}}

\newcommand{\RR}{{\mathbb{R}}}

\newcommand{\LL}{{\mathbb{L}}}



\renewcommand{\mathbf}{\bm} 

\begin{document}
\bstctlcite{IEEEexample:BSTcontrol}

\title{Performance Analysis of Indoor VLC Network with Secure Downlink NOMA for Body Blockage Model}

\author{Tianji~Shen,~\IEEEmembership{Member,~IEEE}, 
Vamoua~Yachongka,~\IEEEmembership{Member,~IEEE},
Yuto~Hama,~\IEEEmembership{Member,~IEEE},
and~Hideki~Ochiai,~\IEEEmembership{Fellow,~IEEE}
\thanks{
  Tianji~Shen is with the B2B Division, 
  Hithink Royalflush Information Network Co. Ltd., Hangzhou, Zhejiang 310023, China.
  (E-mail:~\mbox{shentianji@myhexin.com})

  Vamoua~Yachongka is with
  the Department of Computer Science and Engineering, University of Texas at Arlington, Arlington, TX 76010 USA.
  (E-mail:~\mbox{va.yachongka@ieee.org})

  Yuto~Hama is with Department of Electrical and Computer Engineering, Yokohama National University, Yokohama, Kanagawa 240-8501, Japan.
  (E-mail:~\mbox{yuto.hama@ieee.org})

  Hideki~Ochiai is with 
  the Graduate School of Engineering, Osaka University, Osaka 565-0871, Japan 
  (E-mail:~\mbox{ochiai@comm.eng.osaka-u.ac.jp}).
  }
}
\makeatletter
\def\ps@IEEEtitlepagestyle{%
  \def\@oddfoot{\mycopyrightnotice}%
  \def\@oddhead{\hbox{}\@IEEEheaderstyle\leftmark\hfil\thepage}\relax
  \def\@evenhead{\@IEEEheaderstyle\thepage\hfil\leftmark\hbox{}}\relax
  \def\@evenfoot{}%
}
\def\mycopyrightnotice{%
  \begin{minipage}{\textwidth}
  \centering \scriptsize
  This work has been submitted to the IEEE for possible publication. Copyright may be transferred without notice, after which this version may no longer be accessible.
  \end{minipage}
}
\makeatother

\maketitle
\begin{abstract}
    In this work,
    we investigate
    the performance
    of indoor visible light communication~(VLC) networks 
    based on power domain non-orthogonal multiple access~(NOMA) 
    for mobile devices,
    where multiple legitimate users 
    are equipped with 
    photodiodes~(PDs).
    We propose a body blockage model 
    for both the legitimate users 
    and eavesdropper to address scenarios 
    where the communication links from transmitting light-emitting diodes (LEDs) 
    to receiving devices are blocked by the bodies of all parties.
    Furthermore, we propose a novel LED arrangement that improves secrecy without requiring knowledge of the channel state information (CSI) of the eavesdropper. 
    This arrangement reduces the overlapping areas covered by different LED units supporting distinct users. 
    We also suggest two LED transmission strategies, 
    i.e., 
    \emph{simple} 
    and 
    \emph{smart LED linking},
    and compare their performance with
    the conventional \emph{broadcasting}
    in terms of transmission sum rate and secrecy sum rate.
    Through computer simulations,
    we demonstrate the superiority of our proposed strategies
    to the conventional approach.
\end{abstract}
\begin{IEEEkeywords}
    Body blockage model,
    non-orthogonal multiple access~(NOMA),
    passive eavesdropper,
    physical layer security~(PLS),
    visible light communication~(VLC).
\end{IEEEkeywords}

\section{Introduction}
\label{sec:intro}

\IEEEPARstart{V}{isible} light communication (VLC) 
is a promising technology 
for the beyond fifth-generation (B5G) and sixth-generation (6G) networks~\cite{chi_2020_vlc}. 
VLC operates in the visible light spectrum, 
utilizing the emitted light for both illumination and data transmission.
With its high potential data rate,
VLC has garnered significant interest~\cite{ndjiongue_vlc_2015}. Implemented through intensity modulation (IM) of light-emitting diodes (LEDs), VLC leverages existing infrastructure for illumination. It serves as an intriguing complement to radio frequency (RF) communication, particularly due to the need for strict RF radiation monitoring.

Non-orthogonal multiple access (NOMA) has gained attention for its high spectral efficiency compared to the conventional orthogonal multiple access (OMA) techniques~\cite{saito_2013_nonorthogonal,yang_2017_ontheoptimality,shen_2021_auav-enabled}. 
Unlike OMA, NOMA allocates the same time-frequency resources to multiple users,
maximizing the limited frequency spectrum utilization~\cite{akbar_2021_noma}.
This work focuses on power domain NOMA, 
employing multi-user superposition coding at the transmitter and successive interference cancellation (SIC) at the receivers~\cite{dai_noma_2018}. 
Previous studies, 
such as~\cite{choi_2016_allocation,cui_2016_outage}, 
have investigated power allocation algorithms for NOMA.

In addition,
physical layer security (PLS) facilitates 
secure communication 
by leveraging wireless channel randomness,
including fading, noise, and interference~\cite{wyner_1975_thewiretap}. 
PLS aims to utilize the stochastic properties of wireless channels 
to provide secure communication. 
It ensures secrecy when the channel of the legitimate user 
surpasses that of the eavesdropper,
making eavesdropping-resistant information transmission 
feasible without the need for symmetric keys 
or complex encryption/decryption algorithms~\cite{bloch_2008_wireless,liu_2017_physical,hamamreh_2019_classifications,mukherjee_2014_prnciples}.
Also,
PLS with VLC-enabled NOMA has been investigated
in~\cite{zhao_2018_noma,peng_2021_pls,shi_2022_secure,li_2022_secrecy}.
Specifically,
in~\cite{zhao_2018_noma},
a precise expression of secure outage probability was derived for 
a NOMA-enabled multi-user multiple-input single-output~(MISO) 
VLC system.
In~\cite{peng_2021_pls}, 
the secure transmission of MISO with NOMA under VLC was investigated, 
taking into account both known and unknown channel state information~(CSI) of eavesdroppers.
In~\cite{shi_2022_secure},
secrecy performance of a NOMA-enabled VLC
system in the presence of an active eavesdropper
was demonstrated.
It is shown in~\cite{li_2022_secrecy} that 
the pseudo user, 
who interrupts the eavesdropper's SIC process,
enhances secrecy performance in the NOMA-enabled VLC system.

\subsection{Literature Review and Motivation}
\label{subsec:related_work}

In recent years,
a variety of signal processing approaches
has been investigated for VLC-aided PLS systems
such as beamforming~\cite{mostafa_physical-layer_2015,liu_2020_beamforming,cho_2021_cooperativebeamforming},
precoding~\cite{arfaoui_secrecy_2018,pham_energy_2021},
artificial optical jamming~\cite{wang_optical_2018},
and polar code-based
secure coding~\cite{che_physical-layer_2018}.
Stochastic geometry theory
serves as a mathematical tool 
to determine the average secrecy capacity 
of the system,
considering a random distribution of users 
throughout the room~\cite{pan_secure_2017}.
Convex optimization theory enhances the secrecy performance of VLC systems subject to various illumination constraints, 
such as the LED optical power constraint~\cite{cho_physical_2019},
the peak power constraint 
to reduce eye damage~\cite{mostafa_physical-layer_2015},
and the light energy harvesting 
and dimming control constraint~\cite{liu_2020_beamforming}.
However,
most of the studies on PLS for VLC networks 
have focused on defending against external malicious eavesdroppers~\cite{pham_energy_2021,pan_secure_2017,cho_physical_2019},
some of which addressing the transmission of confidential information where users are unaware of information
not intended for them~\cite{arfaoui_secrecy_2018}.
These studies, 
based on the properties of VLC channels,
explored the use of beamforming 
and jamming techniques to enhance secrecy 
in the physical layer,
despite the high computational complexity involved. 
In practice,
however, 
the position of the passive eavesdropper is unknown
since they do not respond to transmission mechanisms.
Furthermore,
broadcasting systems
still face significant challenges
in implementing beamforming schemes
for VLC transmission.
With respect to LED arrangements, 
most works adopt
a square lattice alignment
to improve illuminance~\cite{zhang_2016_grouping}.
However, 
in broadcasting VLC,
such an arrangement may degrade communication reliability
since interference may occur in the overlapping areas served by multiple LED devices with each transmitting their own signal.


Furthermore, 
previous researches on indoor VLC systems ignore
the effects of body blockage from users,
with few exceptions such as~\cite{singh_2021_vlc,aboagye_2023_ris,beysens_2020_exploiting}.
In~\cite{singh_2021_vlc},
the authors investigated the
cylinder model of user bodies
without considering the angles.
Reference \cite{aboagye_2023_ris}
investigates the use of reconfigurable intelligent surfaces
(RIS) in order to avoid body blockage in VLC system.
The authors of~\cite{beysens_2020_exploiting} 
investigated body blockage modeling 
by dividing the orientation angle 
into several parts and assessing each part for blockage.


\subsection{Contributions}
\label{subsec:novelty}
Inspired by the above-mentioned state-of-the-art studies,
our work focuses on VLC-enabled NOMA-based PLS systems
without any knowledge of the eavesdropper.
The main contributions\footnote{
The preliminary version of this work was presented 
at the 2023 IEEE Wireless Communications and Networking Conference (WCNC) 
in \cite{shen_vlc_2022}.
In the present manuscript, 
we additionally propose a new strategy 
(i.e., {\it smart~LED~linking strategy})
as well as a new power allocation scheme based on an estimated maximum sum rate in order to improve the secrecy performance.} 
of this work are summarized as follows:
\begin{itemize}
  \item
  The aforementioned works (e.g.,~\cite{beysens_2020_exploiting,soltani_2019_modeling}) 
  only partially address the modeling of body blockage.
  However,
  the modeling methods for \emph{body blockage} 
  in these works are not universally applicable,
  as they rely on statistical methods 
  to approximate the channel state from LED sources to PD receivers.
  Therefore, 
  we propose a geometrical-based general blockage model in the presence of a passive eavesdropper.
  It accounts for scenarios 
  where user bodies may block the light path 
  from the LED to the PD,
  determining blockage events through geometric calculations.

  \item 
  We propose a new power allocation scheme based on 
  NOMA to maximize the transmission sum rate of legitimate users, 
  taking into account only their locations 
  and statistical radiation angles. 
  We assume that the system is unaware 
  of the azimuth and polar angles of the devices,
  crucial for determining 
  if the light obstruction is present. 
  Furthermore,
  two power allocation schemes are compared through computer simulations.

  \item 
  We propose a new LED arrangement that places LEDs 
  in the form of 
  an equilateral triangle lattice 
  to reduce the overlapping area of different LED units. 
  Furthermore, we introduce
  two transmission strategies
  for this LED arrangement,
  called
  \emph{simple}
  and
  \emph{smart~LED~linking strategies},
  so as to investigate the trade-off between 
  the transmission and secrecy sum rates.
\end{itemize}

\subsection{Paper Organization}
The rest of this paper is organized as follows.
\cref{sec:system-model} describes the system model including the channel model as well as the body blockage model considered throughout this work,
and \cref{sec:signal} describes the NOMA signal models and performance metrics.
In~\cref{sec:strategies},
the proposed transmission strategies of secure NOMA in VLC
are described.
Simulation results are presented and discussed in~\cref{sec:simu}.
Finally, \cref{sec:conclusion} concludes this work.

\subsection*{Notations}

\begin{table}[!t]
  \centering
  \small
  \caption{Notations}
  \label{tab:notation}
  \begin{tabularx}{\columnwidth}{|c|X|}
    \hline
    {\bf Notation} & {\bf Description} \\ \hline \hline
    $[x]^+$ & An operator is defined as $[x]^+ = \max(x,0)$. \\ \hline
    $\EE[\cdot]$ & Expectation operator. \\ \hline
    ${\cal N}(\mu, \sigma^2)$ & 
    The circular-symmetric Gaussian random 
    distribution
    with mean $\mu$ and variance $\sigma^2$. \\ \hline
    ${Q}(\cdot)$ &
    The Gaussian $Q$ function,
    defined as 
    $Q(x) \triangleq \int_x^{\infty}\frac{e^{-t^2/2}}{\sqrt{2\pi}}dt$
    \cite[(B.20)]{goldsmith}. \\ \hline
    $\mathbbm{1}
    \left[{\cal X}\right]$ & 
    A binary indicator function for an event ${\cal X}$ that is equal to $1$ if the event ${\cal X}$ occurs and $0$ otherwise. \\ \hline
    $\|{\bm x}\|$ &
    Euclidean norm of the vector ${\bm x} = (x_1, x_2, \ldots, x_M)$,
    i.e.,
    $\| {\bm x}\| = \sqrt{\sum_{m=1}^M x_m^2}$\\ \hline
  \end{tabularx}
\end{table}
The notations used in this paper are listed in \cref{tab:notation}.

\section{System and Channel Model}
\label{sec:system-model}

In this section, we describe our VLC system model
as well as
VLC channel model.

\subsection{VLC System Model}

\begin{figure}[!ht]
  \centering
  \includegraphics[clip,width=\columnwidth]{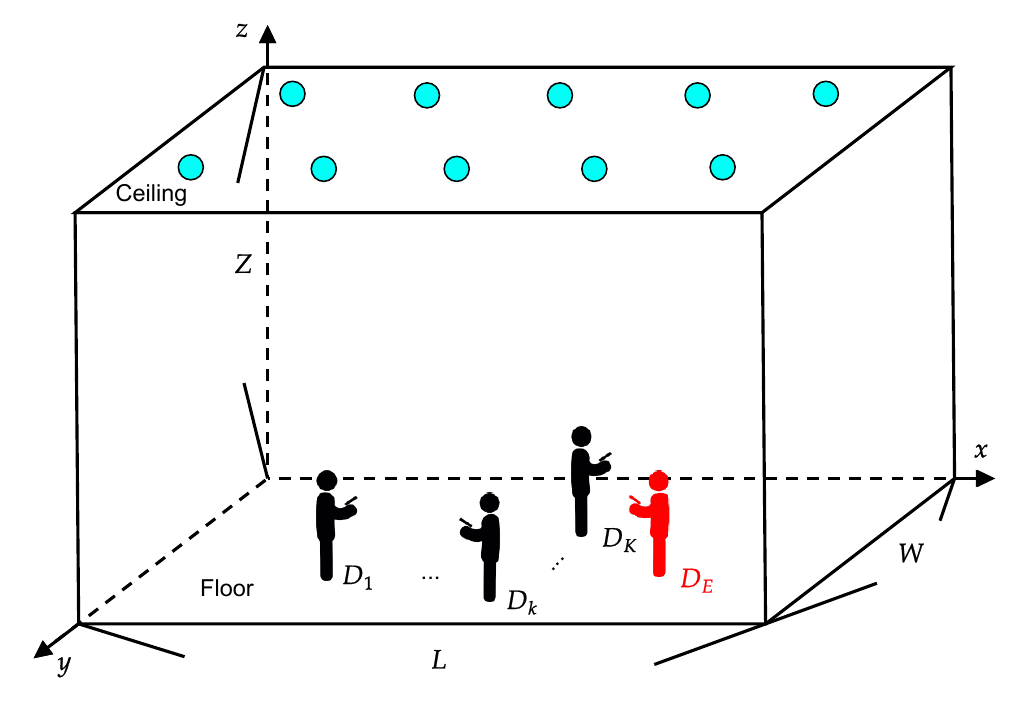}
  \caption{An illustration of the system model. 
  Cyan dots represent LED transmitters,
  $D_1, \cdots, D_K$ denote the positions of PDs of the legitimate users, 
  and 
  $D_E$ is 
  the position 
  of the eavesdropper.}
  \label{fig:system_model}
\end{figure}

As shown in~\cref{fig:system_model},
we consider an indoor downlink scenario
in the room size of length~$L$, width~$W$, and height~$Z$,
where $N$~LED sources
serve $K$~legitimate users
with the index set~${\cal K} = \{1,2,\ldots,K\}$, 
as well as an eavesdropper denoted by $D_E$.
We consider that each of the legitimate users and the eavesdropper has a single PD receiver.
We assume that the system is not aware of the location of the eavesdropper.
Moreover,
we assume that all the bodies of users as well as the eavesdropper are modeled as cylinders
with radius $r$ as 
in~\cite{komine_2005_shadowing,chen_2017_performance,beysens_2020_exploiting}.
Also, let $S_n = (x_{S_n},y_{S_n},Z)$
denote the position corresponding to the $n$th LED source 
in the Cartesian coordinates
for $n \in {\cal I} \triangleq \{1,2,\ldots,N\}$.
We define $D_{k} = (x_{D_{k}},y_{D_{k}},z_{D_{k}})$
with $k \in {\cal K}$
as the positions representing the corresponding PD of legitimate users,
and $D_E = (x_{E},y_{E},z_{E})$
representing the corresponding PD of the eavesdropper.
Furthermore,
we define
${\cal K}' \triangleq {\cal K} \cup E$ as the set of indices corresponding to all the PD receivers in the room.

In addition, 
we assume that the height of the device $z_{D_k}$
and eavesdropper $z_E$
are defined as the constant ratio $\nu \in (0,1)$
of the height of the users,
i.e.,
$z_{D_k} = \nu {\sf h}_{D_k}$
and $z_{D_E} = \nu {\sf h}_{E}$.

\subsection{Channel Model}

In this subsection, we define the VLC channel between an LED-PD transceiver pair
with emphasis on the effect of the users' bodies.

\subsubsection{VLC Channel}

\begin{figure}[!ht]
  \centering
  \includegraphics[clip,width=\columnwidth]{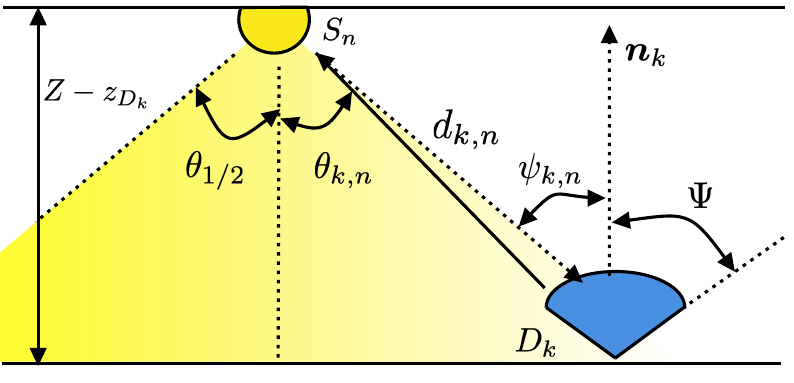}
  \caption{An illustration of the
    VLC path 
    between
    the $n$th LED and the $k$th PD,
    where
    $\theta_{k,n}$ is the radiation angle between $S_n$ and $D_k$,
    $\theta_{1/2}$ is the half illuminance angle of the LED transmitter,
    $\psi_{k,n}$ is the incidence angle of link between $S_n$ and $D_k$,
    $\Psi$ is the received field of view (FoV) of the PD receiver,
    and
    ${\bm n}_k$ is the normal direction of the photosensitive surface of the $k$th PD.}
  \label{fig:vlc_pathgain}
\end{figure}

\begin{figure}[!t]
  \centering
  \includegraphics[clip,width=\columnwidth]{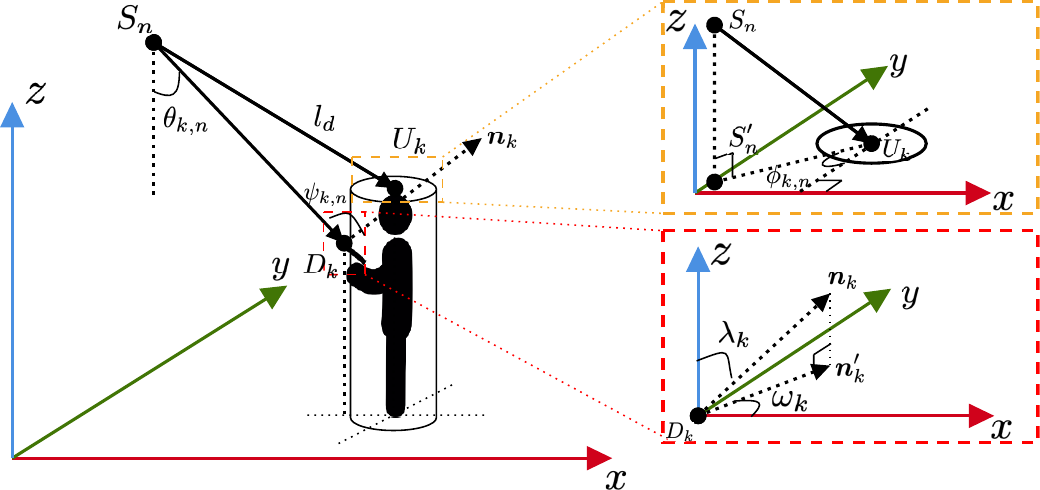}
  \caption{The geometry of visible light communications with $S_n$, $D_k$ and $U_k$.
  $S_n'$ is the projection point of $S_n$ onto the plane that 
  contains the top base of cylinder,
  ${\bm n}_k$ is the normal direction of the photosensitive surface of $D_k$ with its projection at XOY plane is given by ${\bm n}_k$,
  $\omega_k$ is the azimuth angle between ${\bm n}_k$ and $x$-axis,
  $\lambda_k$ is the polar angle of $D_k$,
  $\theta_{k,n}$ is the radiation angle from $S_n$ to $D_k$,
  $\psi_{k,n}$ is the incidence angle of the light beam from $S_n$ to $D_k$,
  and $\phi_{k,n}$ is the azimuth angle between the vector $\protect\overrightarrow{S_n U_k}$ and $y$-axis.}
  \label{fig:geometric_para}
\end{figure}

In~\cref{fig:vlc_pathgain},
we illustrate the geometric relationship between LED $S_n$ and PD $D_k$ 
with the normal direction of its photosensitive surface ${\bm n}_k$.
Most of the existing studies on VLC systems consider the line-of-sight~(LoS) component in 
VLC channel models and ignore the non-line-of-sight~(NLoS) component
due to a huge diffuse component gap between LoS and NLoS components~\cite{kahn_1997_wireless,
  komine_2004_vlc,
  yin_2016_performance}.
Subsequently,
assuming that the considered LEDs have Lambertian emission patterns,
the channel gain
between 
$S_n$
and 
$D_k$
with $k \in {\cal K}'$
is expressed as~\cite{kahn_1997_wireless}
\begin{align}
  \label{eq:channel_gain_user}
  &h_{k,n}
  \left(\theta_{k,n},\psi_{k,n}\middle|\Psi\right)
  \nonumber\\
  &= 
  A
  \frac{(m+1) R_{\mathrm{PD}}}{2\pi} 
  \cos^m(\theta_{k,n})
  \frac{\cos(\psi_{k,n})}{d_{k,n}^2}
  g\left(\psi_{k,n}\middle| \Psi\right)
  \nonumber\\
  & \quad \cdot
  \prod_{l \in {\cal K}'}
  \mathbbm{1}
  \left[{\cal E}(n,k,l)\right],
\end{align}
where
the notations introduced in~\cref{eq:channel_gain_user}
are listed in \cref{tab:2}.

\begin{table}
  \small
  \caption{Parameter Definitions in Equation \cref{eq:channel_gain_user}}
  \label{tab:2}
  \begin{tabularx}{\columnwidth}{|c|X|}
    \hline
    {\bf Notation} & {\bf Description} \\ \hline \hline
    $A$ & The effective area of the photosensitive surface of PD receivers. \\ \hline
    $\theta_{k,n}$ & The radiation angle from $S_n$ to $D_{k}$. \\ \hline
    $\psi_{k,n}$ & The incidence angle from $S_n$ to $D_{k}$. \\ \hline
    $\Psi$ & The received field of view (FoV) of PD receivers. \\ \hline
    $m$ & The order of the Lambertian emission
    with half illuminance angle of LED transmitter $\theta_{1/2}$,
    defined as $m = \frac{-\ln(2)}{\ln\left(\cos\left(\theta_{1/2}\right)\right)}$. \\ \hline
    $d_{n,k}$ & The Euclidean distance between $S_n$ and $D_{k}$. \\ \hline
    $g\left(\psi_{n,k} \middle| \Psi\right)$ & The gain of the optical concentrator with 
    reflective index $\eta \in (1,2)$,
    expressed as
      \[g\left(\psi_{k,n}\middle| \Psi\right)
      =
      \begin{cases}
        \frac{\eta^2}{\sin^2(\Psi)}, &~\mbox{if}~~0 \leq \psi_{k,n} \leq \Psi,\\
        0, &~\mbox{if}~~\psi_{k,n} > \Psi.
      \end{cases}\]
    \\ \hline
  \end{tabularx}
\end{table}
In \cref{eq:channel_gain_user}, 
${\cal E}(n, k, l)$
represents the event
that the link from 
the $n$th source $S_n$ to 
the $k$th user $D_k$ 
is \emph{not} blocked by the body of the $l$th user $D_{l}$.
In order to facilitate the understanding of these geometric parameters,
we depict the geometric relations of $S_n$, $D_k$, 
and $U_k$ in~\cref{fig:geometric_para},
where $U_k = (x_{U_k},y_{U_k},{\sf h}_{D_k})$ is the position of the top center of the cylinder
for $k \in {\cal K}'$.
The vector 
from $D_k$ to $U_k$,
$\overrightarrow{D_k U_k} \triangleq
  \left(
    x_{U_k} - x_{D_k},
    y_{U_k} - y_{D_k},
    {\sf h}_{D_k} - z_{D_k}
  \right)$,
can be expressed
as
\begin{align}
  \label{eq:relation_dk_uk}
  \overrightarrow{D_k U_k}
  &=
  \overrightarrow{S_n U_k} - \overrightarrow{S_n D_k}
  \nonumber \\
  &=
  \left(
    \sqrt{l_d^2- z_{D_k}^2} \cos \omega_k,
    \sqrt{l_d^2- z_{D_k}^2} \sin \omega_k,
    \Delta z_{D_k}
  \right),
\end{align}
where $l_d$
is
the distance between the center of cylinder and the PD of its mobile device,
$\Delta z_{D_k} \triangleq {\sf h}_{D_k}-z_{D_k}$,
and
$\omega_{k} \in [-\pi,\pi)$
is the azimuth angle between the line from $S_n$ to $D_k$
and $y$-axis,
i.e., the angle between the XOY projection of the direction vector ${\bm n}_{k}$ and $x$-axis~\cite{chi_2018_vlc}.
Moreover,
the incidence angle $\psi_{k,n}$ in~\cref{eq:channel_gain_user}
can be obtained by the positions of $S_n$ and $D_{k}$,
azimuth angle $\omega_k$,
and polar angle $\lambda_k \in [0, \frac{\pi}{2}]$,
i.e.,~\cite{soltani_2019_modeling}
\begin{align}
  \label{eq:omega}
  &\cos \psi_{k,n} 
  \nonumber\\
  &= 
  \left(
    \frac{
      x_{S_n} - x_{D_k}
    }
    {
      \left\|\overrightarrow{D_k S_n}\right\|
    }
  \right)
  \sin \lambda_{k}
  \cos \omega_{k}
  \nonumber\\
  &\quad +
  \left(
    \frac{
      y_{S_n} - y_{D_k}
    }
    {
      \left\|\overrightarrow{D_k S_n}\right\|
    }
  \right)
  \sin \lambda_{k}
  \sin \omega_{k}
  +
  \left(
    \frac{
      Z - z_{D_k}
    }
    {
      \left\|\overrightarrow{D_k S_n}\right\|
    }
  \right)
  \cos \lambda_k,
\end{align}
where
$\overrightarrow{D_k S_n}$
is the vector from $D_k$ to $S_n$,
i.e.,
$\overrightarrow{O S_n} - \overrightarrow{O D_k}$
with the origin $O$.
Additionally,
based on the statistical model of the orientation of mobile devices 
with \emph{walking-activity} situation
in~\cite{soltani_2019_modeling},
we set $\omega_k$ as a uniform random variable
over the interval from $-\pi$ to $\pi$,
and $\lambda_k$
as the truncated Laplace distribution 
with a mean of $\mu = 0.518$
and a standard deviation of $\sigma = 0.136$.\footnote{
  The range of $\lambda_k$
  is strictly limited to $[0, \pi/2]$.
  Therefore,
  the probability distribution function (PDF)
  can be given by 
  $f_{\lambda_k}(\lambda_k)
  =
  \frac{
    \exp
    \left(
      -\frac{
        \lambda_k - \mu
      }
      {
        b
      }
    \right)
  }
  {
    2b
    \left(
      G_{
        \frac{\pi}{2}
      }
      -
      G_0
    \right)
  }$,
  where $b = \sqrt{\sigma^2/2}$,
  $G_0 = \frac{1}{2}\exp\left(-\frac{\mu}{b_{k}}\right)$,
  and
  $G_{\frac{\pi}{2}} = 1 - \frac{1}{2}\exp\left(-\frac{\frac{\pi}{2}-\mu}{b}\right)$.
  As a given pair of $\mu$ and $\sigma$,
  we can derive that $G_{\frac{\pi}{2}} \approx 1$
  and $G_0 \approx 0$,
  and it can be shown that $\int_{0}^{\frac{\pi}{2}}
  f_{\lambda_k}(\lambda) d\lambda \approx 1$
  \cite[eqs. (10) and (11)]{soltani_2019_modeling}.
}

\subsubsection{Body Blockage Model}
\label{sec:body_model}

\begin{figure}[t]
  \centering
  \subfigure[Three-dimensional (3D) schematic diagram.]{
    \includegraphics[clip,width=.4\textwidth]{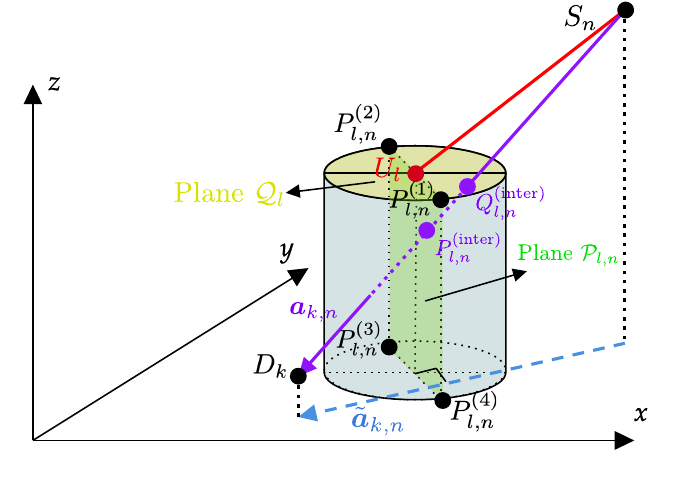}
    \label{fig:cylinder_shadow}
  }
  \subfigure[Two-dimensional (2D) plane (XOY) schematic diagram.]{
    \includegraphics[clip,width=.26\textwidth]{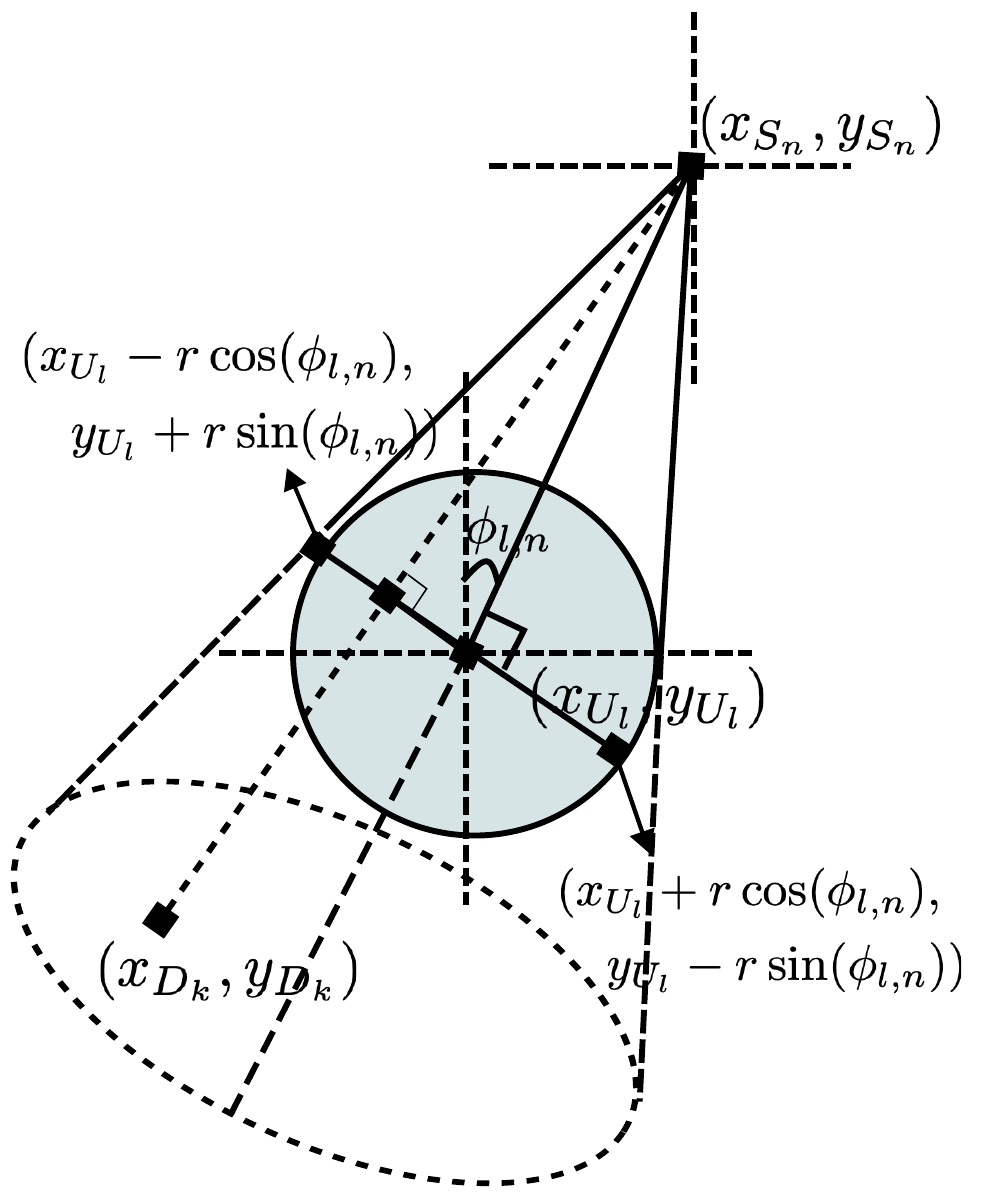}
    \label{fig:plane_shadow}
  }
  \caption{A mathematical explanation of the body blockage model in this work.
  The $n$th LED light is positioned at $S_n(x_{S_n}, y_{S_n}, Z)$, while the PD receiver for the $k$th user is located at $D_k(x_{D_k}, y_{D_k}, z_{D_k})$.
  A cylinder, who is modeled as the receiver holder of $D_{l}$ with $l \in {\cal K}'$, is centered at $U_{l}(x_{U_{l}}, y_{U_{l}}, {\sf h}_{D_l})$ on its top plane, with an azimuth angle $\phi_{l,n}$ between itself and the $n$th LED source.
  The light vector ${\bm a}_{k, n}$ projects onto the XOY plane as $\tilde{\bm a}_{k, n}$.
  The plane ${\cal P}_{l, n}$, which passes through the point $U_{l}$, is perpendicular to the vector $\tilde{\bm a}_{k, n}$.
  Additionally, the vector ${\bm a}_{k, n}$ intersects with the planes ${\cal Q}_{l}$ and ${\cal P}_{l, n}$ at the points $Q_{l, n}$ and $P_{l, n}$, respectively.
  }
  \label{fig:shadow}
\end{figure}
The presence of the body of users as well as the eavesdropper
inside an indoor environment may certainly 
affect the channel characteristics 
and the communication performance of a VLC system. 
In fact, the channel gain of the PD devices is affected not only by its geometric parameters
but also by the shadows of the human bodies.
Subsequently,
we focus on the event ${\cal E}(n,{k},{l})$ in~\cref{eq:channel_gain_user}.
The indicator function 
$\mathbbm{1} \left[{\cal E}(n,k,l)\right]$
plays a pivotal role in determining whether this event occurs, returning 1 if it does and 0 otherwise.
In what follows, we describe the role of this function in detail.
As illustrated in~\cref{fig:shadow},
in order to simplify notations,
let us define 
${\bm a}_{k,n} = \overrightarrow{S_n D_k} = (x_{D_k}-x_{S_n}, y_{D_k}-y_{S_n}, z_{D_k}-Z)$
as a vector of light
from $S_n$ to $D_k$
with 
the projection on the XOY plane 
$\tilde{\bm a}_{k,n} 
= 
(x_{D_k}-x_{S_n}, y_{D_k}-y_{S_n},0)$.
Moreover,
we define two planes ${\cal Q}_{l}$ and ${\cal P}_{l,n}$,
where ${\cal Q}_{l}$ contains 
the point $U_{l}$
with
its unit normal vector given by
${\bm q} = (0,0,1)$,
and ${\cal P}_{l,n}$ 
contains the point $U_{l}$ with its unit normal vector 
given by
${\bm p}_{l,n} = \alpha \tilde{\bm a}_{k,n}$
for
a real-valued factor $\alpha$.
The projection area of the vertical cylinder with the four vertices
on the plane ${\cal P}_{l,n}$ is a rectangle, 
with their coordinates given by
\begin{subequations}
  \label{eq:B}
  \begin{align}
    P_{l,n}^{(1)} 
    &= \left(x_{U_{l}} + r \cos \phi_{l,n}, 
    y_{U_{l}} - r \sin \phi_{l,n}, 
    {\sf h}_{D_l}
    \right),\\
    P_{l,n}^{(2)}
    &= \left(x_{U_{l}} - r \cos \phi_{l,n}, 
    y_{U_{l}} + r \sin \phi_{l,n}, 
    {\sf h}_{D_l}
    \right),\\
    P_{l,n}^{(3)}
    &= \left(x_{U_{l}} - r \cos \phi_{l,n}, 
    y_{U_{l}} + r \sin \phi_{l,n}, 
    0
    \right),\\
    P_{l,n}^{(4)}
    &= \left(x_{U_{l}} + r \cos \phi_{l,n}, 
    y_{U_{l}} - r \sin \phi_{l,n}, 
    0
    \right),
  \end{align}
\end{subequations}
where 
$\phi_{l,n}$ is
the azimuth angle between $S_n$
and $U_{l}$,
defined as
\begin{align}
  \label{eq:phi_{U_k}}
  \phi_{l,n} = \arctan 
  \frac{
    \left|
    y_{S_n} - y_{U_{l}}
    \right|
  }
  {\left|
    x_{S_n} - x_{U_{l}}
    \right|
  }.
\end{align}
Next,
the intersection point of a line and a plane satisfies the following Lemma:
\begin{lemma}
  \label{lemma:1}
  \begin{figure}[t]
    \centering
    \includegraphics[clip,width=.8\columnwidth]{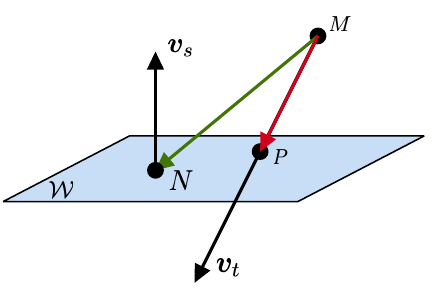}
    \caption{Illustration for the relation of intersection point of vectors and plane ${\cal W}$ in Lemma 1.
    Points $N$ and $P$ are on plane ${\cal W}$,
    ${\bm v}_s$ is the normal vector to plane ${\cal W}$,
    and ${\bm v}_t$ is the vector have same direction as $\protect\overrightarrow{MP}$ 
    (i.e., ${\bm v}_t = a \protect\overrightarrow{MP}$ with real factor $a$).}
    \label{fig:lemma1}
  \end{figure}
  Suppose that 
  the vector $\mathbf{v}_{t} = (x_{t},y_{t},z_{t})$
  in Euclidean space
  passes through
  the point $M(x_m,y_m,z_m)$
  and 
  the point $P(x_p,y_p,z_p)$
  on the
  plane ${\cal W}$
  as illustrated in \cref{fig:lemma1}.
  If
  the plane ${\cal W}$ also includes the point $N(x_n,y_n,z_n)$ 
  with one of its normal vectors
  $\mathbf{v}_s = (x_{s},y_{s},z_{s})$,
  the vector $\overrightarrow{MP}$ should satisfy
  \begin{align}
    \label{eq:lemma_1}
    \overrightarrow{MP} = 
    \frac{\overrightarrow{MN} \cdot \mathbf{v}_s}
    {\mathbf{v}_s \cdot \mathbf{v}_{t}}
    \mathbf{v}_{t},
  \end{align}
  where the inner product of ${\bm v}_s$ and ${\bm v}_{t}$ satisfies
  \begin{align}
    {\mathbf{v}_s \cdot \mathbf{v}_{t}} \neq 0.
  \end{align}
\end{lemma}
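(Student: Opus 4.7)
The plan is to prove this via the standard ray--plane intersection argument. The hypothesis has two pieces that I would separate cleanly: (i) $P$ lies on the line through $M$ with direction $\mathbf{v}_t$, and (ii) $P$ lies on the plane $\mathcal{W}$, which contains $N$ and has normal $\mathbf{v}_s$. Strictly speaking, the statement does not say ``$P$ is the intersection,'' but since both $P$ and $N$ lie in $\mathcal{W}$, and $\overrightarrow{MP}$ is being claimed to be parallel to $\mathbf{v}_t$, this is implicitly how $P$ is characterized; I would state this interpretation at the start of the proof to remove any ambiguity.

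First, I would use (i) to write $\overrightarrow{MP} = t\,\mathbf{v}_t$ for some real scalar $t$. Then I would use (ii): since $N, P \in \mathcal{W}$ and $\mathbf{v}_s \perp \mathcal{W}$, the chord $\overrightarrow{NP}$ is orthogonal to $\mathbf{v}_s$, i.e.\ $\overrightarrow{NP}\cdot\mathbf{v}_s = 0$. Decomposing $\overrightarrow{NP} = \overrightarrow{NM} + \overrightarrow{MP} = -\overrightarrow{MN} + t\,\mathbf{v}_t$ and taking the inner product with $\mathbf{v}_s$ yields
\begin{equation*}
  -\overrightarrow{MN}\cdot\mathbf{v}_s + t\,(\mathbf{v}_t\cdot\mathbf{v}_s) = 0.
\end{equation*}

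Because $\mathbf{v}_s\cdot\mathbf{v}_t \neq 0$ by hypothesis, I can solve uniquely for $t = (\overrightarrow{MN}\cdot\mathbf{v}_s)/(\mathbf{v}_s\cdot\mathbf{v}_t)$, and substituting back into $\overrightarrow{MP} = t\,\mathbf{v}_t$ gives exactly~\cref{eq:lemma_1}. I would also add a one-line remark noting the geometric meaning of the non-degeneracy condition: $\mathbf{v}_s\cdot\mathbf{v}_t = 0$ would mean the line is parallel to $\mathcal{W}$, so either there is no intersection or the whole line lies in $\mathcal{W}$, and in neither case is $P$ uniquely determined.

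Honestly, no step here is hard; the only real pitfall is notational. The statement uses $\mathbf{v}_t$ both as a vector ``passing through'' $M$ (i.e.\ a direction along a line anchored at $M$) and as a free vector appearing in an inner product, so I would be explicit about that identification. The computation itself is two lines, and the main value of the lemma is that it will be instantiated later with $M = S_n$, $\mathbf{v}_t = \mathbf{a}_{k,n}$, and $\mathcal{W} \in \{\mathcal{Q}_l, \mathcal{P}_{l,n}\}$ to produce the intersection points $Q_{l,n}$ and $B_{l,n}$ needed for the blockage indicator in~\cref{eq:channel_gain_user}.
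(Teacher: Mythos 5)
Your proposal is correct and follows essentially the same route as the paper's Appendix~A proof: parametrize the line as $P = M + d\,\mathbf{v}_t$, impose $\overrightarrow{NP}\cdot\mathbf{v}_s = 0$, solve for the scalar, and substitute back to obtain \cref{eq:lemma_1}. Your added remarks on the implicit characterization of $P$ as the intersection point and on the geometric meaning of $\mathbf{v}_s\cdot\mathbf{v}_t\neq 0$ are sensible clarifications but do not change the argument.
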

\begin{proof}
    See Appendix~\ref{sec:appendix}.
\end{proof}

We define $Q_{l,n}^{({\rm inter})}$ as the intersection of vector 
${\bm a}_{k,n}$ and plane ${\cal Q}_{l}$,
and $P_{l,n}^{({\rm inter})}$ as the intersection of vector 
${\bm a}_{k,n}$ and plane ${\cal P}_{l,n}$.
Their coordinates can be derived from \cref{lemma:1} as
\begin{align}
  \overrightarrow{S_n Q_{l,n}^{({\rm inter})}} &= 
  \frac{
    \overrightarrow{S_n U_{l}}
    \cdot
    {\bm a}_{k,n}
  }
  {{\bm a}_{k,n} \cdot {\bm q}}
  {\bm q}, \\
  \overrightarrow{S_n P_{l,n}^{({\rm inter})}} &= 
  \frac{
    \overrightarrow{S_n U_{l}}
    \cdot
    {\bm a}_{k,n}
  }
  {{\bm a}_{k,n} \cdot {\bm p}_{l,n}}
  {\bm p}_{l,n}.
\end{align}
Therefore,
the event ${\cal E}_k(n, {k}, {l})$
in
\cref{eq:channel_gain_user}
can be determined by
the following proposition:

\begin{proposition} 
  \label{pro-2}
 
  The path ${\bm a}_{k,n}$
  is blocked by the cylinder body of user $U_{l}$,
  (i.e., the event ${\cal E}(n, {k}, {l})$ in~\cref{eq:channel_gain_user} is \emph{false}),
  if either of the following two conditions is satisfied:
  \begin{enumerate}
    \item 
    The intersection point $Q_{l,n}^{({\rm inter})}$ should lie in 
    the top base of cylinder ${\cal Q}_{l}$,
    i.e.,
    it
    should satisfy that
    \begin{align}
      \|Q_{l,n}^{({\rm inter})}-U_{l}\| \leq r.
      \label{eq:cond_1}
    \end{align}

    \item
    The intersection point $P_{l,n}^{({\rm inter})}$
    should be in the area of 
    the rectangle formed by
    the points
    $P_{l,n}^{(1)}$, $P_{l,n}^{(2)}$, $P_{l,n}^{(3)}$,
    and $P_{l,n}^{(4)}$,
    i.e.,
    it should satisfy both of the following inequalities
\end{enumerate}
    \begin{align}
      \left(
        \overrightarrow{P_{l,n}^{(1)} P_{l,n}^{(2)}} 
        \times
        \overrightarrow{P_{l,n}^{(1)} P_{l,n}^{({\rm inter})}} 
      \right)
      \cdot
      \left(
        \overrightarrow{P_{l,n}^{(3)} P_{l,n}^{(4)}} 
        \times
        \overrightarrow{P_{l,n}^{(3)} P_{l,n}^{({\rm inter})}} 
      \right)
      &\geq 0, 
      \nonumber\\
      \left(
        \overrightarrow{P_{l,n}^{(2)} P_{l,n}^{(3)}} 
        \times
        \overrightarrow{P_{l,n}^{(2)} P_{l,n}^{({\rm inter})}} 
      \right)
      \cdot
      \left(
        \overrightarrow{P_{l,n}^{(4)} P_{l,n}^{(1)}} 
        \times
        \overrightarrow{P_{l,n}^{(4)} P_{l,n}^{({\rm inter})}} 
      \right)
      &\geq 0.
      \label{eq:cond_2}
    \end{align}
  Otherwise,
  it is not blocked by the cylinder (i.e., the event ${\cal E}(n, {k}, {l})$ 
  in \cref{eq:channel_gain_user}
  is \emph{true}).
\end{proposition}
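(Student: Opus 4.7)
My plan is to establish that the ray ${\bm a}_{k,n}$ is blocked by the cylindrical body of user $l$ if and only if it intersects either the top circular face of the cylinder (giving condition~1) or its lateral cylindrical surface (giving condition~2). The first step is to justify this decomposition: since $S_n$ sits on the ceiling at height $Z>H$ and $D_k$ lies at height $z_D$, with both endpoints outside the cylindrical region (the PD sits at horizontal distance $l_d$ from the axis by \cref{eq:relation_dk_uk}), the ray can enter the solid only through the top disk or the lateral side; the bottom disk lies on the floor and is unreachable from above for a downward-directed ray.

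For the top-disk case, the plane ${\cal Q}_l$ containing the top face has normal ${\bm q}=(0,0,1)$ and passes through $U_l$. By \cref{lemma:1}, the line through $S_n$ in direction ${\bm a}_{k,n}$ meets ${\cal Q}_l$ at the stated point $Q_{l,n}$. The ray therefore hits the top face exactly when $Q_{l,n}$ lies in the closed disk of radius $r$ about $U_l$, which is precisely \cref{eq:cond_1}.

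For the lateral-surface case, I would exploit the fact that ${\cal P}_{l,n}$ is the vertical plane through the cylinder axis whose normal ${\bm p}_{l,n}=\alpha\tilde{\bm a}_{k,n}$ is aligned with the horizontal projection of the light direction. The orthogonal projection of the cylinder along $\tilde{\bm a}_{k,n}$ onto ${\cal P}_{l,n}$ is exactly the rectangle ${\cal B}_{l,n}$ with corners in \cref{eq:B}: width $2r$ in the direction perpendicular to the light within the horizontal plane, and height $H$ vertically. Because the ray's horizontal component is parallel to ${\bm p}_{l,n}$, the instant at which it meets ${\cal P}_{l,n}$ coincides with its instant of minimum horizontal distance to the cylinder axis, so \cref{lemma:1} yields the intersection point $B_{l,n}$. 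A straightforward convexity argument shows that the ray hits the lateral surface iff $B_{l,n}\in{\cal B}_{l,n}$: for every $z\in[0,H]$ the horizontal cross-section of the cylinder is a disk of radius $r$ centered on the axis, so the ray pierces the side iff its perpendicular horizontal offset is at most $r$ and the altitude at the crossing lies in $[0,H]$. Finally, the cross-product inequalities in \cref{eq:cond_2} are the standard ``same-side-of-two-opposite-edges'' test for a coplanar point to lie inside a rectangle, and hence coincide with $B_{l,n}\in{\cal B}_{l,n}$.

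The main obstacle is the joint edge case where the ray's horizontal projection does enter the top disk (perpendicular horizontal distance from the axis $\leq r$) but the altitude at the nearest-approach point exceeds $H$, so that condition~2 fails; I would handle this by showing that the ray must then cross $z=H$ at a point still inside the disk (by monotonicity of height along the ray and the convexity of the chord of the disk cut by the ray's horizontal projection), so that condition~1 triggers. The symmetric case with altitude below $0$ is impossible since $z_D\ge 0$. Combining the three situations---perpendicular distance $>r$ (no blockage), distance $\leq r$ with altitude in $[0,H]$ (condition~2), and distance $\leq r$ with altitude above $H$ but ray still cutting the top (condition~1)---gives the required equivalence and yields the ``otherwise'' clause as the contrapositive.
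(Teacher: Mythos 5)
The first thing to note is that the paper itself gives no proof of this proposition: only \cref{lemma:1} is proved in the appendix, and the blockage criterion is simply asserted as a geometric fact. Your argument therefore supplies something the paper omits, and it follows the decomposition the authors clearly intend: a downward ray whose endpoints both lie outside the body can enter the solid cylinder only through the top disk or the lateral surface; \cref{lemma:1} locates the two candidate intersection points $Q_{l,n}$ and $B_{l,n}$; \cref{eq:cond_1} is the point-in-disk test for the top face, and \cref{eq:cond_2} is the standard same-side-of-opposite-edges test for a coplanar point in a rectangle. Your handling of the joint edge case (closest horizontal approach at altitude above $z=H$, forcing the ray to cross the top disk by convexity of the horizontal distance along the ray and monotonicity of the height) is precisely the step needed to make the ``at least one condition'' claim airtight, and it is correct.

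Two points deserve tightening. First, dismissing the below-floor case ``since $z_D\ge 0$'' is too quick: $B_{l,n}$ is computed on the \emph{infinite} line, so its altitude can be negative even though no point of the segment is. The correct observation is that if the closest horizontal approach occurs below the floor, it occurs at a parameter beyond $D_k$, so the horizontal distance to the axis is decreasing along the entire segment, and blockage would then force $D_k$ itself to lie inside the solid cylinder --- excluded because $D_k$ sits at horizontal offset $\sqrt{l_d^2+(H-z_D)^2}>r$ from its own body axis and receivers are assumed not to be located inside other users' bodies. Second, your identification of ${\cal B}_{l,n}$ with the projection of the cylinder onto ${\cal P}_{l,n}$ is not exact under the paper's definitions: the width direction of the rectangle in \cref{eq:B} is perpendicular to the horizontal projection of $\overrightarrow{S_nU_{l}}$ (through $\phi_{l,n}$), whereas the normal of ${\cal P}_{l,n}$ is $\tilde{\bm a}_{k,n}$, the horizontal projection of $\overrightarrow{S_nD_k}$; these directions coincide only when $D_k$ and $U_{l}$ are horizontally collinear with $S_n$, so the corners $B_{l,n}^{(i)}$ need not lie in ${\cal P}_{l,n}$ and the coplanar point-in-rectangle test is exact only up to a tilt of order $r/d$. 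This is really a looseness in the paper's own construction rather than in your reasoning, but a complete proof should either flag the approximation or redefine the rectangle with respect to $\tilde{\bm a}_{k,n}$. With those two repairs your proof establishes the direction the proposition actually asserts (blocked $\Rightarrow$ condition 1 or 2, whose contrapositive is the ``otherwise'' clause); the converse you also sketch is not required and, as stated on the infinite line, can produce false positives when the cylinder lies beyond $D_k$.
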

\section{NOMA Signal Models and Performance Metrics}
\label{sec:signal}

In this section,
we describe the NOMA signal model for our proposed VLC transmission system
and introduce performance metrics of the transmission and secrecy sum rate.

\subsection{NOMA Signal Model}

We assume that the transmission signals by LED transmitters are 
the direct current (DC) biased 
signals~\cite{zhang_2016_grouping}.
For convenience, we define a new set $\mathcal{K}_n \subset \mathcal{K}$,
which represents the indices of users
that receive the signal transmitted from the $n$th LED $S_n$.
The transmitted symbol from the $n$th LED,
denoted by~$x_n \in \RR$,
is 
a Gaussian signal
with a fixed bias current $I_{\mathrm{DC}} \in \RR_{+}$
that is applied for the purpose of illumination.
Following the definition in~\cite[eq.~(2)]{zhang_2016_grouping}, 
we have\footnote{
    For the signal power of VLC transmission,
    $I_{\rm DC}$ is a large constant value.
    This assumption is commonly used 
    in the studies related to VLC,
    e.g.,~\cite{zhang_2016_grouping,kahn_1997_wireless,li_2023_adaptive}.
}
\begin{align}
  \label{eq:x_n}
  x_{n} = \sum_{k \in {\cal K}_n} \sqrt{P_s \beta_{k}({\cal K}_n)} 
  s_{n,k}
  + I_{\mathrm{DC}},
\end{align}
where $P_s$ represents the total transmission power at each LED
and $s_{n,k}$ is the modulated message signal intended for $D_k$ by $S_n$,
which is assumed to be a Gaussian random variable with zero-mean
and unit variance\footnote{
  Provided that $I_{\rm DC}$ is sufficiently large,
  the probability of negative $x_n$ 
  could be ignored.
}, 
i.e., $s_{n,k} \sim {\cal N}(0,1)$.
Furthermore,
$\beta_k({\cal K}_n)$ represents the allocated power ratio for~$D_{k}$
and fulfills that
\begin{align}
  \sum_{k \in {\cal K}_n} \beta_k({\cal K}_n) = 1.
\end{align}

Therefore,
after removing the DC bias,
the received signal by $D_k$ for
$k \in {\cal K}'$
can be
formulated
as
\begin{align}
  y_k
  &=
  \sum_{n=1}^N
  {h}_{k,n}\left(\theta_{k,n},\psi_{k,n}\middle|\Psi\right)
  \sum_{k_n \in {\cal K}_n}
  \sqrt{P_s \beta_{k_n}({\cal K}_n)} x_n
  + 
  n_{k},
\end{align}
where
$h_{k,n}\left(\theta_{k,n},\psi_{k,n} \middle| \Psi\right)$
is the channel coefficient from $S_n$ to $D_{k}$
defined in~\cref{eq:channel_gain_user}
and
$n_{k}$
is the real additive white Gaussian noise (AWGN)
with variance $\sigma_N^2$
at
$D_k$,
i.e.,
$n_k \sim {\cal N}(0, \sigma_N^2)$.

\subsection{Transmission and Secrecy Sum Rate} 

We assume that each device can decode its information with perfect SIC~\cite{kizilirmak_2015_perfectSIC},
i.e., 
there is no 
residual interference
after the SIC process.
The device first detects the symbol 
with the highest received power
and
then starts the SIC process by
treating the other symbols as 
additional Gaussian noise terms.
The transmission rate\footnote{
  The units of transmission rate as well as secrecy rate are defined as bit/Hz/sec
  \cite{feng_2019_joint}.
} 
of the $k$th PD $D_k$ for $k \in {\cal K}_n$,
defined as $R_{D_k}$,
is evaluated 
as~\cite{feng_2019_joint}
\begin{align}
  R_{D_k}\left(\theta_{k,n},\psi_{k,n}\right) &=
  \frac{1}{2} 
  \log_2
  \left(
    1 + \gamma_{k}
    \left(
      \theta_{k,n},\psi_{k,n},
      {\cal K}_n
    \right)
  \right),
\end{align}
where 
the scaling factor $\frac{1}{2}$ is due to the 
baseband signal transmission
and
$\gamma_k(\theta_{k,n},\psi_{k,n},{\cal K}_n)$ is
the received signal-to-interference-plus-noise ratio (SINR) at
the $k$th user (or eavesdropper)
under the assumption of the perfect SIC, defined in
\cref{eq:gamma_uk} located at the top of this page,
with
$|{\cal K}_n|$ representing the cardinality of the set ${\cal K}_n$.
We define a new set for all the indices of LED light signal received by the $k$th user (or eavesdropper) in the set ${\cal K}_n$
as 
$\mathcal{I}_k = \{n~|~k \in \mathcal{K}_n, n \in \mathcal{I}\}$,
e.g.,
if the $k$th user received the signals transmitted by LED sources with indices 1 and 2,
then ${\cal I}_k = \{1,2\}$.
Also,
$\mathcal{I}_k^{\tilde{n}}$
is
defined in the same manner as $\mathcal{I}_k$
with an additional condition $\tilde{n} \neq n$,
i.e.,
$\mathcal{I}_k^{\tilde{n}} = \{n~|~k \in \mathcal{K}_n, n \in \mathcal{I}, n \neq \tilde{n}\}$.

The eavesdropper wiretaps information sent from $S_n$ to $D_k$ 
during the downlink transmission period and 
employs the same decoding process as that performed at legitimate users.
The transmission rate, 
which can be wiretapped by the eavesdropper from the data intended for the $k$th user via the eavesdropper's channel, 
defined as $R_{E,D_k}$,
is evaluated as
\begin{align}
  R_{E,D_k}\left(\theta_{E,n},\psi_{E,n}\right) = 
  \frac{1}{2}
  \log_2
  \left(
    1
    +
    \gamma_{k}(
      \theta_{E,n},
      \psi_{E,n},
      {\cal K}_n
    )
  \right),
\end{align}
where $\gamma_{k}(\theta_{E,n},\psi_{E,n},{\cal K}_n)$ 
denotes
the SINR of the $k$th user signal wiretapped by the eavesdropper,
expressed by \cref{eq:gamma_uk}
located at the top of this page.
\begin{table*}[!t]
  \normalsize
  \centering
  \begin{minipage}{\textwidth}
    \begin{align}
      \label{eq:gamma_uk}
        \gamma_k(\theta_{k,n},\psi_{k,n},{\cal K}_n) 
        &=
          \displaystyle
          \frac{
          \overbrace{
            \max_{{\cal K}_{n}}
            \left[
            \left(
              \sum_{
                n \in {\cal I}_k
              }
              h_{k,n}
              \left(
                \theta_{k,n},\psi_{k,n}
                \middle|
                \Psi
              \right)
              \right)^2
              \beta_k({\cal K}_{n})
              \right]
          }^{{\rm Intended~signal~for~the~}{\it k}{\rm th~user~from~}{\it {\cal I}_k}}
          }
          {
          \underbrace{
            \max_{{\cal K}_n}
            \left[
            \left(
            \sum_{
              n \in {\cal I}_k
            }
            h_{k,n}
            \left(
              \theta_{k,n},\psi_{k,n}
              \middle|
              \Psi
            \right)
            \right)^2
            \sum_{\overline{k}={k+1}}^{\left|{\cal K}_{n}\right|}
            \beta_{\overline{k}}({\cal K}_{n})
            \right]
          }_{{\rm Interference~of~NOMA~corresponding~to~the~other~users~in~}{\it {\cal I}_k}}
          +
          \underbrace{
            \left(\sum_{
            \tilde{n}
            \in 
            {\cal I}_{k}^{\tilde n}}
            h_{{k},\tilde{n}}
            \left(
              \theta_{k,\tilde{n}},\psi_{k,\tilde{n}}
              \middle|
              \Psi
            \right)
            \right)^2
          }_{{\rm Interference~from~the~other~LEDs}}
          +
          \frac{\sigma_N^2}{P_s}
          }
    \end{align}
    \medskip
    \hrule
  \end{minipage}
\end{table*}

In this work, 
the \emph{transmission sum rate} among the legitimate users
and the \emph{secrecy sum rate}
are defined as~\cite{pham_2017_secrecy}
\begin{align}
  R_T &= 
  \sum_{n=1}^N\sum_{k=1}^K 
    R_{D_k}\left(\theta_{k,n},\psi_{k,n}\right),\\
  R_S &= \sum_{n=1}^N\sum_{k=1}^K 
  \left[R_{D_k}\left(\theta_{k,n},\psi_{k,n}\right) 
    - 
    R_{E,D_k}\left(\theta_{E,n},\psi_{E,n}\right)
  \right]^+,
  \label{sum-rate}
\end{align}
respectively.
\section{Transmission Strategies and Power Allocation Schemes}
\label{sec:strategies}

In this section, we introduce the three transmission strategies and two power allocation schemes
investigated in this work.

\subsection{Transmission Strategies}

We first introduce the conventional 
{\it broadcasting strategy} used in~\cite{chen_2018_on}.
Then,
we propose two new strategies referred to as
{\it simple and smart LED linking strategies},
depending on the various locating scenarios of users
in order to improve their transmission performance.

\subsubsection{Broadcasting Strategy}

In this transmission strategy,
all the LEDs transmit the same superposed signal 
by NOMA~\cite{chen_2018_on}.
The transmitted signal 
by $S_n$,
denoted by $x_n^{\rm b}$ for this strategy,
is
similar to \cref{eq:x_n}
with 
${\cal K}_n = {\cal K}$,
which
can be expressed as
\begin{align}
  {x}^{\rm b}_n = 
  \sum_{k \in \mathcal{K}}
  \sqrt{
    P_s
    \beta_k
    ({\cal K})
  }
  s_{n,k}
  +
  I_{\mathrm{DC}}.
\end{align}

\subsubsection{Simple LED Linking Strategy}

In this strategy,
the system will transmit signals based on the estimated location 
of legitimate users. 
The LED will transmit signals to users located in its coverage area.
In case there exist more than one user in the coverage of the LED, 
the signal will be transmitted by the multiple access scheme,
i.e.,
NOMA.
Hence, the
transmitted signal by $S_n$,
denoted by
$x_n^{\rm smp}$,
can be expressed as
\begin{align}
  {x}^{\mathrm{smp}}_n = 
  \begin{cases}
    \sum_{k \in \mathcal{K}_n}
    \sqrt{\beta_k({\cal K}_n)
    P_s}
    s_{n,k}
    +
    I_{\mathrm{DC}},
    &
    {\cal K}_n \neq \varnothing,\\
    I_{\mathrm{DC}},
    & \text{otherwise}.
  \end{cases}
\end{align}

A disadvantage of this method is that the devices 
located in the overlapping area 
may be interfered with the signals sent from 
different LED sources, and the interference causes some rate loss in terms of transmission performance. 
In order to overcome such a rate loss, 
we further introduce a novel smart LED linking strategy.

\subsubsection{Smart LED Linking Strategy}

    \begin{algorithm}[t]
      \caption{Smart LED linking strategy}
      \label{algo:Intel}
      \begin{algorithmic}[1]
        \Require{$D_k(x_{D_k},y_{D_k},z_{D_k})$~device coordinates, $S_n(x_{S_n},y_{S_n},Z)$~LED coordinates, 
        $\theta_{1/2}$~half illuminance angle.}
        \Ensure {Group set of LEDs ${\LL}$}
        \State {\textbf{Initialize}: $r \gets (Z - z_{D_k})\tan \theta_{1/2}$, ${\LL} \gets \varnothing$}
        \For {$n = 1:N$}
            \State {${\cal K}_n \gets \varnothing$, ${\cal M} \gets \varnothing$,
            ${\cal L}_n \gets \varnothing$}
            \State {${\LL} \gets {\LL} \cup \{{\cal L}_n\}$}
            \For {$k = 1:N$}
            \If {$(x_{D_k} - x_{S_n})^2 + (y_{D_k} - y_{S_n})^2 \leq r^2$} 
                \State {${\cal K}_n \gets {\cal K}_n \cup \{k\}$}
            \EndIf
            \State {$k \gets k + 1$}
            \EndFor
            \For {$l = 1:n$}
              \If {${\cal L}_l \cap {\cal K}_n \neq \varnothing$} 
                \State {${\cal M} \gets {\cal M} \cup \{l\}$, ${\cal K}_n \gets {\cal L}_l \cup {\cal K}_n$}
              \EndIf
            \EndFor
            \For {$m \in {\cal M}$} 
              \State {${\cal L}_m \gets {\cal K}_n$}
            \EndFor
            \State {${\cal L}_n \gets {\cal K}_n$}
        \EndFor
        \State \Return {${\LL}$}
      \end{algorithmic}
    \end{algorithm}

Since it may be difficult for PDs
to detect the interfered symbol in the simple LED linking strategy,
we propose the smart LED linking strategy
in order to avoid receiving the interfering symbols.
In this strategy,
the system will first determine 
if there is any user in the overlapping coverage area.
If so, 
the system will synchronize the two LEDs covering the overlapping area.
These linked LEDs will transmit the same signal independently,
according to the needs of users in that region.
This strategy
is described by~\cref{algo:Intel},
where we denote the set of group sets of LEDs as $\LL$, 
and the transmitted signal in the $n$th group set ${\cal L}_n$
(${\cal L}_n \in \LL$) 
by $S_n$. 
Similar to the \emph{simple LED linking}, 
the transmitted signal, 
denoted by $x_n^{\rm smt}$, 
is expressed as
\begin{align}
  x^{\mathrm{smt}}_n = 
  \begin{cases}
    \sum_{l \in {\cal L}_n}
  \sqrt{\beta_l
  ({\cal L}_n)
  P_s}
  s_{n,l}
  +
  I_{\mathrm{DC}},
  &
  {\cal L}_n \neq \varnothing,\\
  I_{\mathrm{DC}},
  & \text{otherwise}.
  \end{cases}
\end{align}

\subsection{Power Allocation Schemes Based on Maximum Sum Rate}

We introduce two distinct power allocation schemes within the context of NOMA in this work.
The first scheme is the conventional 
\emph{fixed power allocation scheme}~\cite{dogra_2022_journal}.
The second, 
as our novel contribution,
is the
\emph{maximum sum rate-based scheme},
which optimizes power distribution by estimating the maximum sum rate, thereby significantly enhancing the performance and efficiency of NOMA systems.

\subsubsection{Fixed Power Allocation Scheme}
In this scheme, we assume that the power ratio 
among different users
is fixed.
In SIC, 
the receiver will detect the signal with stronger power first.
Therefore,
one of the power ratio parameters of NOMA
$\beta_k^{\mathrm{fix}}({\cal K}_n)$
for $k \in {\cal K}_n$
can be defined as~\cite{dogra_2022_journal}
\begin{align}
  \beta_k^{\mathrm{fix}}({\cal K}_n)
  \triangleq
  \begin{cases}
    \zeta(1 - \zeta)^{k-1}, &~\mbox{if}~~k < |{\cal K}_n|, \\
    (1 - \zeta)^{k-1}, &~\mbox{if}~~k = |{\cal K}_n|,
  \end{cases}
\end{align}
where
$\zeta \in (0.5,1]$ is the fixed power ratio.

\subsubsection{Maximum Sum Rate-based Scheme}

In this scheme, 
the LED is able to optimize the power allocation in order to maximize the sum rate only 
by the positions of users.
By defining the power ratio parameter of NOMA in this scheme as 
$\beta_k^{\mathrm{opt}}({\cal K}_n)$,
the corresponding optimization problem 
of maximizing the sum of the optimized transmission rate
$\widehat{R}_{k}$ 
is formulated as follows~\cite{boyd_2004_convex}:
\begin{subequations}
  \label{eq:optimization}
  \begin{align}
    \text{maximize} 
    \quad & \sum_{k \in \mathcal{K}}\widehat{R}_{k} 
    \left(\overline{\lambda}_{k},\tilde{\omega}_{k,n},\beta_k^{\mathrm{opt}}({\cal K}_n)\right) 
    \tag{\ref{eq:optimization}}\\
    \mathrm{subject~to} \quad & \sum_{k \in \mathcal{K}} \beta_k^{\mathrm{opt}}({\cal K}_n) \leq 1, \\
    & \beta_{\hat{k}}^{\mathrm{opt}}({\cal K}_n) \geq 
      \beta_{\hat{k}+1}^{\mathrm{opt}}({\cal K}_n), 
      \\
    & \beta_k^{\mathrm{opt}}({\cal K}_n) \geq 0
  \end{align}
\end{subequations}
for $\hat{k} \in \{1,2,\ldots,K-1\}$.
In this optimization problem,
the radiation angle 
$\overline{\lambda}_{k}$ 
is set as the expected value 
of $\lambda_k$ defined in~\cref{eq:omega}
and
the azimuth angle 
$\tilde{\omega}_{k,n}$ of $D_k$ is set as the argument of the maximum sum rate.
As a result,
the formula of the optimization problem can be rewritten as
\begin{align}
  &\sum_{k \in \mathcal{K}}\widehat{R}_{k} \left(\overline{\lambda}_{k},\tilde{\omega}_{k,n},\beta_k^{\mathrm{opt}}({\cal K}_n)\right)
  \nonumber\\
  &= \sum_{k \in \mathcal{K}}\frac{1}{2} \log_2 \left(
    1 + 
    \widehat{\gamma}_{k}\left(
      \overline{\lambda}_{k},\tilde{\omega}_{k,n},\beta_k^{\mathrm{opt}}({\cal K}_n),
      {\cal K}_n
      \right)
    \right),
\end{align}
where
$\widehat{\gamma}_{k}
\left(
  \overline{\lambda}_{k},
  \tilde{\omega}_{k,n},
  \beta_k^{\mathrm{opt}}({\cal K}_n),
  {\cal K}_n
\right)$ is 
defined in \cref{eq:widehatgamma_uk}
located at the top of the next page
\begin{table*}
  \centering
  \normalsize
  \begin{minipage}{\textwidth}
    \begin{align}
      &\widehat{\gamma}_{k}
      \left(
        \overline{\lambda}_{k},
        \tilde{\omega}_{k,n},
        \beta_k^{\mathrm{opt}}({\cal K}_n),
        {\cal K}_n
      \right)
      \nonumber\\
      &=
      \left\{
      \displaystyle
      \overbrace{
        \max_{{\cal K}_{n}}
        \left[
          \left(
            \sum_{n\in {\cal I}_k}
            \hat{h}_{{k},n}
              \left(
                \theta_{k, n}, 
                \hat{\psi}_{k, n}(\overline{\lambda}_{k},\tilde{\omega}_{k,n})
                \middle|
                \Psi
              \right)
          \right)^2
        \beta_k({\cal K}_{n})
        \right]
      }^{{\rm Decoded~information~of~}{\it k}{\rm th~user~from~}{\it {\cal I}_k}}
      \right\}
      \nonumber\\
      &\cdot
      \left\{
        \displaystyle
        \underbrace{
          \max_{{\cal K}_{n}}
          \left[
          \left(
            \sum_{n \in {\cal I}_k}
            \hat{h}_{{k},n}
              \left(
                \theta_{k, n}, 
                \hat{\psi}_{k, n}(\overline{\lambda}_{k},\tilde{\omega}_{k,n})
                \middle|
                \Psi
              \right)
          \right)^2
          \sum_{\overline{k}={k+1}}^{\left|{\cal K}_{n}\right|}
          \beta_{\overline{k}}({\cal K}_{n})
          \right]
          }_{{\rm Interference~of~NOMA~from~other~users~in~}{\it {\cal I}_k}}
        +
        \underbrace{
          \left(\sum_{
            \tilde{n}
            \in 
            {\cal I}_{k}^{\tilde n}
            }
            \hat{h}_{{k},\tilde{n}}
            \left(
              \theta_{k, \tilde{n}}, 
              \hat{\psi}_{k, \tilde{n}}(\overline{\lambda}_{k},\tilde{\omega}_{k,\tilde{n}})
              \middle|
              \Psi
            \right)
          \right)^2
        }_{{\rm Interference~from~the~other~LEDs}}
        +
        \frac{\sigma_N^2}{P_s}
      \right\}^{-1}
      \label{eq:widehatgamma_uk}
    \end{align}
    \medskip
    \hrule
  \end{minipage}
\end{table*}
with the estimated channel gain
$\hat{h}_{{k},n}
\left(
  \theta_{k, n}, 
  \hat{\psi}_{k, n}(\overline{\lambda}_{k},\tilde{\omega}_{k,n})
  \middle|
  \Psi
\right)$
formulated as
\begin{align}
  \label{eq:op_h_uk}
  \hat{h}_{{k},n}&
  \left(
    \theta_{k, n}, 
    \hat{\psi}_{k, n}(\overline{\lambda}_{k},\tilde{\omega}_{k,n})
    \middle|
    \Psi
  \right) \nonumber\\
  &=
  A \frac{(m+1) R_{\mathrm{PD}}}{2 \pi} 
  \cos^m\left({\theta}_{k, n}\right)
  \frac{
    \cos \left(\hat{\psi}_{k, n}(\overline{\lambda}_{k}, 
    \tilde{\omega}_{k, n})\right)
  }
  {d_{D_k, n}^2}
  \nonumber\\
  & \quad \quad \cdot g\left(\hat{\psi}_{k, n}(\overline{\lambda}_{k}, \tilde{\omega}_{k, n}) \middle| \Psi\right),\\
  \label{eq:op_theta_uk}
  \overline{\lambda}_{k} 
  &= 
  \EE\left[{\lambda_{k}}\right],\\
  \label{eq:op_theta_omega}
  \tilde{\omega}_{k, n} 
  &= 
  \mathop{\arg \max}_{\omega_{k,n} \in [0,2\pi)}
  \left(
    \hat{h}_{{k},n}
    \left(
      \theta_{k, n}, 
      \hat{\psi}_{k, n}(\overline{\lambda}_{k},\omega_{k,n})
      \middle|
      \Psi
    \right)
  \right).
\end{align}
Note that
$\hat{\psi}_{k, n}(\overline{\lambda}_{k}, \tilde{\omega}_{k, n})$
can be derived by substituting
\cref{eq:op_theta_uk,eq:op_theta_omega} into \cref{eq:omega}.
Even though~\cref{eq:optimization} is a non-linear constrained optimization problem,
one can solve it numerically by 
JuMP package with Ipopt optimizer~\cite{dunning_2017_jump,wachter_2006_ipopt}\footnote{
  The case of the indicator function
  introduces non-convexity into the
  optimization problem
  defined in~\cref{eq:optimization},
  specifically with regard to the channel gain
  $\hat{h}_{k,n}$.
  This non-convex nature poses
  a challenge for applying
  convex optimization or approximation methods
  to address the problem.
  However,
  it is important
  to note that the focus of
  the current study does not 
  extend to exploring
  convex optimization for this particular challenge.
  Instead,
  our interest lies in establishing
  the maximum on the transmission
  and secrecy performance.
  Despite the drawback of high complexity,
  the chosen optimizer
  is sufficient for achieving our objectives.
}.

\section{Simulation Results and Discussion}
\label{sec:simu}

In this section, we first describe the setups of 
the numerical environment in our proposed VLC system model
with the mathematical body blockage.
Additionally, 
we evaluate the transmission sum rate of the users 
as well as the secrecy sum rate.
We also investigate the performance difference among transmission strategies 
and the power allocation schemes of NOMA.

\subsection{Parameter Settings for Simulations}

\begin{table}[!t]
    \centering
    \small
    \caption{Simulation parameters}
    \label{tab:simu}
    \begin{tabularx}{\columnwidth}{X|c}
        \noalign{\hrule height 1pt}
        \multicolumn{2}{c}{\textbf{Room configuration}}\\ \noalign{\hrule height 1pt}
        Length $L$ $\times$ Width $W$ & 40 $\times$ 40~$\mathrm{m}^2$ \\ 
        Height of the room $Z$ & 3.98~$\mathrm{m}$ \\
        Number of LED arrays & 23 \\
        Number of users $K$ & 6 \\
        The side length of the equilateral triangle of LED sources $l$
        & $9.6~$m \\
        \noalign{\hrule height 1pt}
        \multicolumn{2}{c}{\textbf{LED electrical and optical characteristics}} \\ 
        \noalign{\hrule height 1pt}
        Average optical power per LED & 0.25~W \\
        Half-intensity angle $\theta_{\frac{1}{2}}$ & $70^{\circ}$ \\
        Power ratio of NOMA $\zeta$ & 0.6 \\
        \noalign{\hrule height 1pt}
        \multicolumn{2}{c}{\textbf{Photodiode characteristics}} \\
        \noalign{\hrule height 1pt}
        Received fiend $\Psi$ & $60^{\circ}$ \\
        Physical area $A$ & 1~$\text{cm}^2$ \\ 
        Noise power $\sigma_N^2$ & -98.35~dBm \\
        Reflective index $\eta$ & 1.5 \\
        Height ratio parameter of the device $\nu$ & 0.75 \\
        \noalign{\hrule height 1pt}
        \multicolumn{2}{c}{\bf \makecell{Parameters related to \\the bodies of users and eavesdropper}} \\
        \noalign{\hrule height 1pt}
        Mean of the height of the users and eavesdropper $H$ & 1.6~$\mathrm{m}$ \\
        Standard deviation of the height of the users and the eavesdropper $\sigma_H$ & 0.1333~$\mathrm{m}$ \\
        Distance from the center of the top base of cylinder to the device
        $l_d$ & 0.4~m\\
        Radius of bodies 
        $r$ & 0.2~m \\
        \noalign{\hrule height 1pt}
    \end{tabularx}
\end{table}

\begin{algorithm}[tb]
    \caption{Random height generation algorithm.}
    \label{algo:random_height}
    \begin{algorithmic}[1]
      \Require{The average body height $H$ and its standard deviation $\sigma_H$.}
      \Ensure {${\sf h}_{D_i}$ with $i \in {\cal K}'$.}
      \ForAll{$i \in {\cal K}'$}
      \While {True}
        \State {Generate ${\sf h}_{D_i}$ with ${\sf h}_{D_i} \sim {\cal N}(H, \sigma_H^2)$.}
        \If {${\sf h}_{D_i}$ > 0}
          \State {\Return ${\sf h}_{D_i}$.}
        \EndIf
      \EndWhile
      \EndFor
    \end{algorithmic}
  \end{algorithm}

\begin{figure}[!ht]
    \centering
    \includegraphics[clip,width=.65\columnwidth, bb=0 0 504 545]{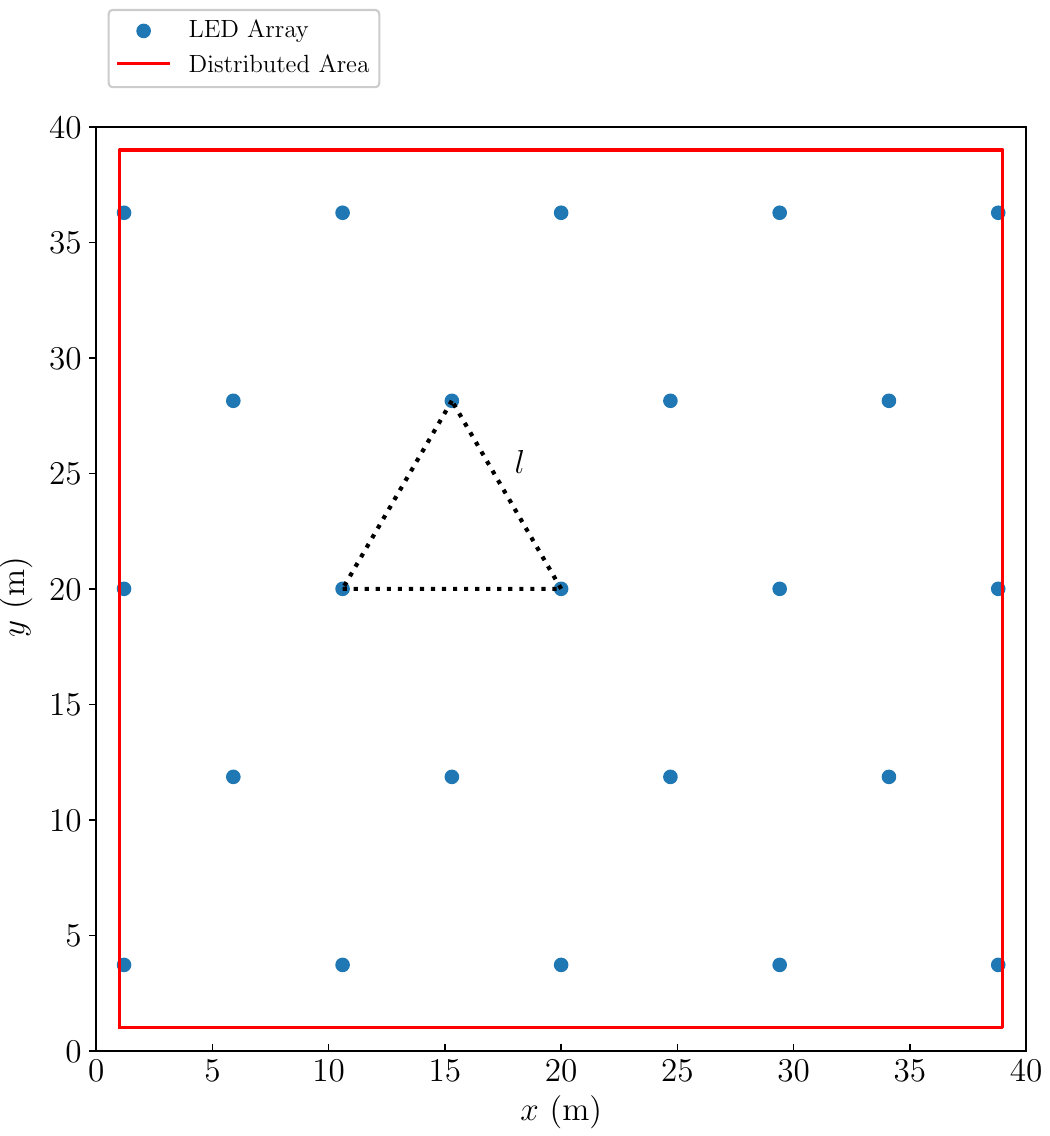}
    \caption{The arrangement of LED sources
    (blue dots represent the positions of LEDs and
        unit of horizontal and vertical axes are measured by meters).}
    \label{fig:led_dist}
\end{figure}

\begin{table}[!ht]
    \centering 
    \small
    \caption{The Cartesian XOY coordinates of users.}
    \label{tab:coords}
    \begin{tabular}{|c||c||c||c|}
        \hline
       & Scenario 1 & Scenario 2 & Scenario 3 \\ \hline\hline
    $D_1$ & $(6,6)$ & $(13, 16)$ & $(10.6, 14.5)$ \\ \hline\hline
    $D_2$ & $(34, 6)$ & $(20, 12)$ & $(15.3, 22.7)$ \\ \hline\hline
    $D_3$ & $(34, 34)$ & $(27, 16)$ & $(5.9, 22.7)$ \\ \hline\hline
    $D_4$ & $(6, 34)$ & $(27, 24)$ & $(34.1, 20)$ \\ \hline\hline
    $D_5$ & $(20, 10)$ & $(20, 28)$ & $(34.1, 32.2)$ \\ \hline\hline
    $D_6$ & $(20, 30)$ & $(13, 24)$ & $(34.1, 7.8)$ \\ \hline
    \end{tabular}
\end{table}

\begin{figure}[!t]
    \centering
    \subfigure[Scenario 1]{
        \includegraphics[clip,width=.8\columnwidth, bb=0 0 451 424]{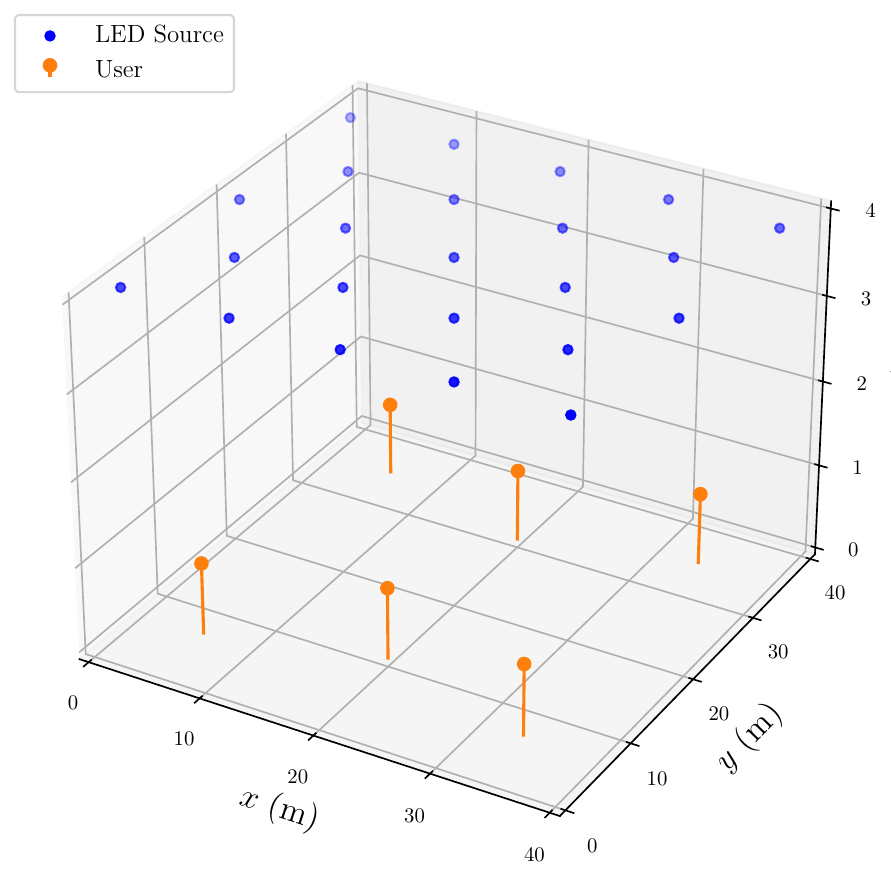}
        \label{fig:user_location_1}
    }
    \subfigure[Scenario 2]{
        \includegraphics[clip,width=.8\columnwidth, bb=0 0 451 424]{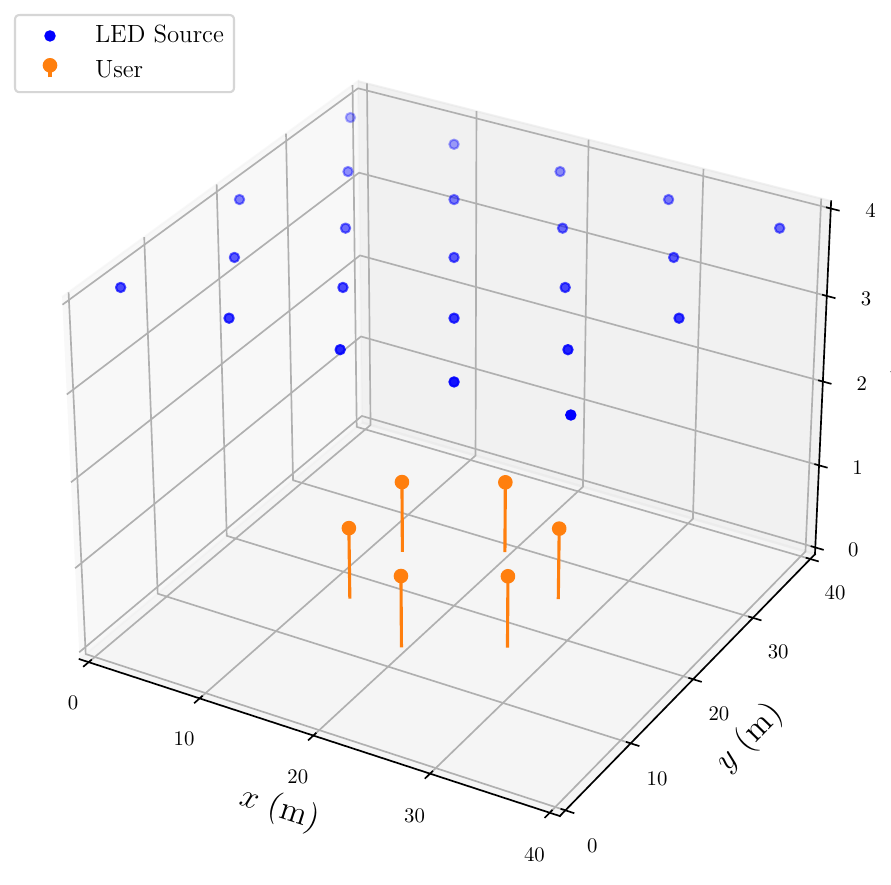}
        \label{fig:user_location_2}
    }
    \subfigure[Scenario 3]{
        \includegraphics[clip,width=.8\columnwidth, bb=0 0 451 424]{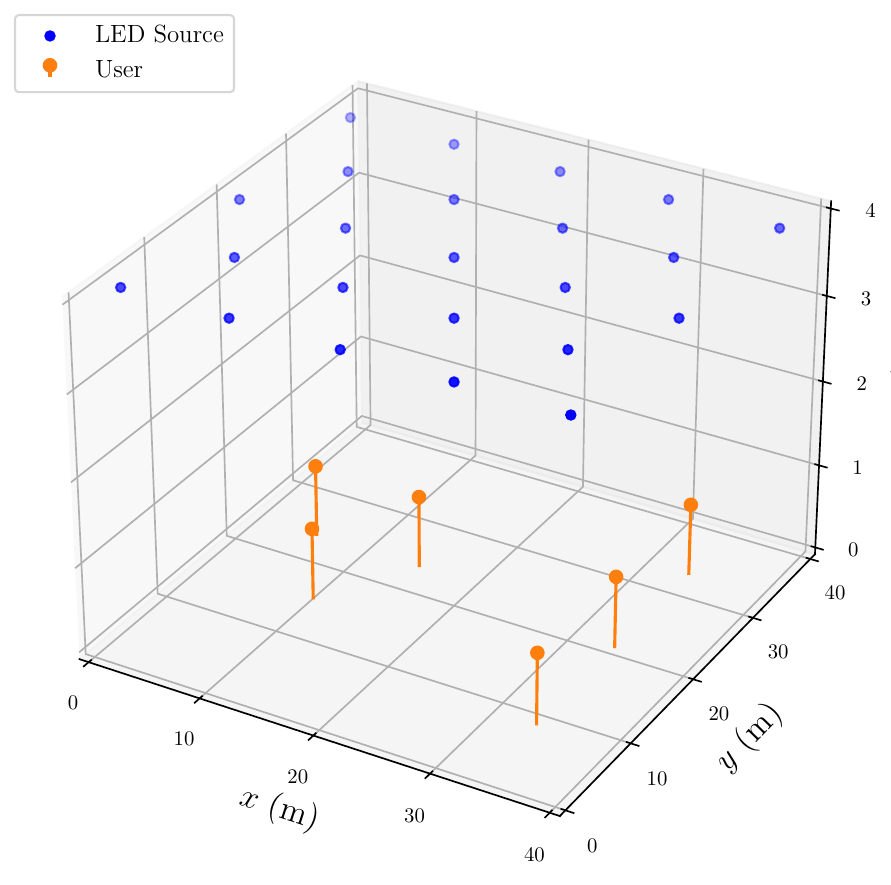}
        \label{fig:user_location_3}
    }
    \caption{Illustration of the location of PDs.}
    \label{fig:user_location}
\end{figure}

The parameters for the numerical simulations are listed in~\cref{tab:simu}, 
including LED and PD settings, as well as the body blockages~\cite{soltani_2019_modeling,zhang_2016_grouping}.
Furthermore,
in order to reduce the repetitive coverage
by different LED units, we consider a \emph{new arrangement of LEDs} where each LED is placed in the vertexes of equilateral triangles.
This arrangement is illustrated in~\cref{fig:led_dist}.
The side length of the triangle satisfies the following inequality:
\begin{align}
  \frac{l}{2}\frac{1}{\cos\frac{\pi}{6}} \leq (z - z_{D_k})\tan \theta_{1/2}.
\end{align}
Therefore, we have
\begin{align}
  l \leq \sqrt{3}(z - z_{D_k})\tan \theta_{1/2}.
\end{align}
We set the location of the first LED at~$(20,20)$ in the ceiling,
and 
the remaining LEDs are assigned to the blue points (cf.~\cref{fig:led_dist}), 
where the coordinates of any three adjacent LEDs form an equilateral triangle with side length $l$.
The geometric coordinates of users are set as
in~\cref{tab:coords} with variance $z_{D_k}$ in each simulation loop
(also illustrated in~\cref{fig:user_location});
Scenario~1 corresponds to the case where all the users are 
sparsely located in the room (cf.~\cref{fig:user_location_1}),
Scenario~2 corresponds to the case where all the users are 
closely located in the room, 
and finally (cf.~\cref{fig:user_location_2}),
Scenario~3 corresponds to the case of a hybrid combination of the two scenarios
(cf.~\cref{fig:user_location_3}).
The Monte Carlo simulation method is employed, with $10^4$
iterations determined 
by the trade-off between simulation accuracy and computational complexity.
Also,
in this work,
the height of each user ${\sf h}_{D_{k}}$
as well as 
the eavesdropper ${\sf h}_{D_E}$
is generated by the
\cref{algo:random_height}.
Therefore, 
in each simulation loop,
we first generate the heights of each user and the eavesdropper.
Afterward, 
we calculate the transmission rate for that loop.

\subsection{Transmission Performance}

\begin{figure}[!t]
    \centering
    \includegraphics[clip,width=\columnwidth, bb=0 0 503 387]{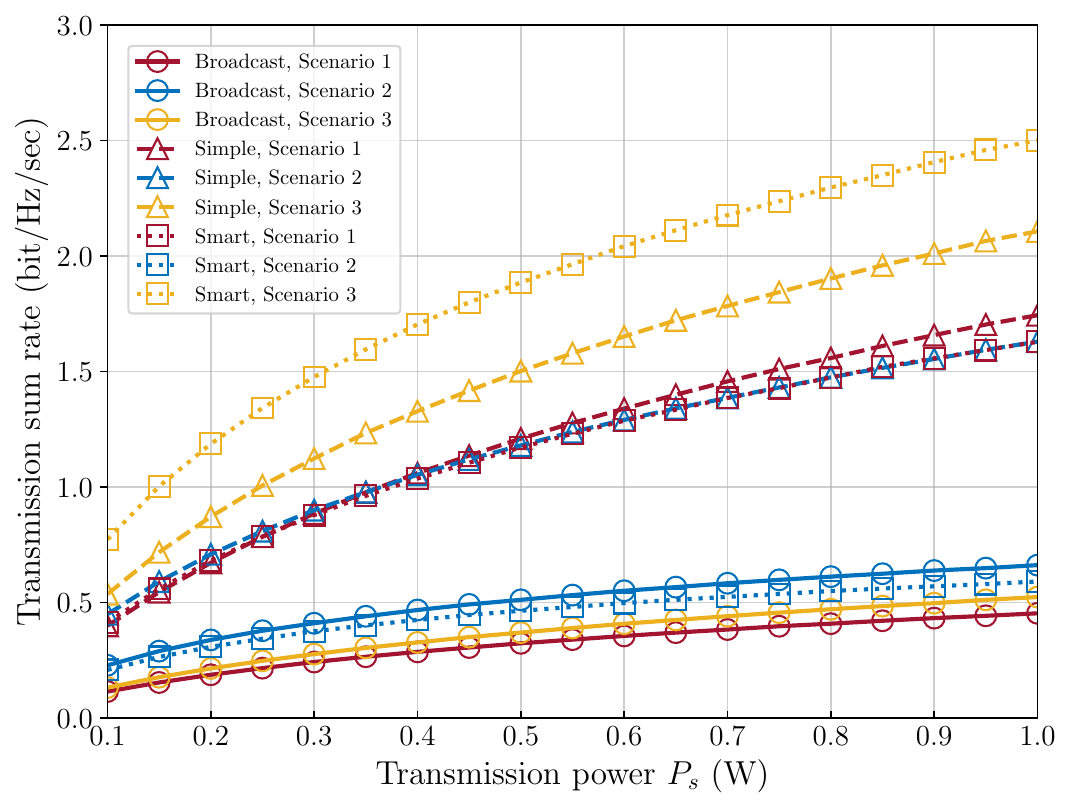}
    \caption{Transmission power versus transmission sum rate with fixed power allocation scheme while the coordinates of the eavesdropper are uniformly distributed
    (solid lines with circle markers are the graphs of the broadcasting strategy,
    dash lines with triangle markers are the graphs of the simple LED linking strategy,
    and 
    dot lines with square markers are the graphs of the smart LED linking strategy).}
    \label{fig:sum_rate_2d}
\end{figure}


In~\cref{fig:sum_rate_2d},
we investigate the transmission sum rate of the three aforementioned transmission scenarios with the proposed strategies and fixed power allocation scheme 
when $x_E$ and $y_E$ are uniformly distributed from 1~m to 39~m
(i.e., in the distributed area),
where the cylinder bodies may block the lights received by the users.
From this figure, 
we can observe that our proposed simple LED linking and smart LED linking strategies outperform the conventional broadcasting strategy.
For a comparison of the performance of 
the two LED linking strategies, 
the results
depend on how users are distributed.
In Scenarios~1~and~2,
the simple LED linking strategy outperforms
the smart LED linking strategy,
whereas 
the smart LED linking strategy 
outperforms 
the simple LED linking strategy in Scenario~3.
From Fig.~\ref{fig:sum_rate_2d},
we also observe that 
this performance difference is not affected by
the transmission power due to the fact that
the location of the users is fixed.
Apparently, 
the distribution of the users affects the transmission performance of the system.
\begin{figure}[t]
    \centering
    \includegraphics[clip,width=\columnwidth]{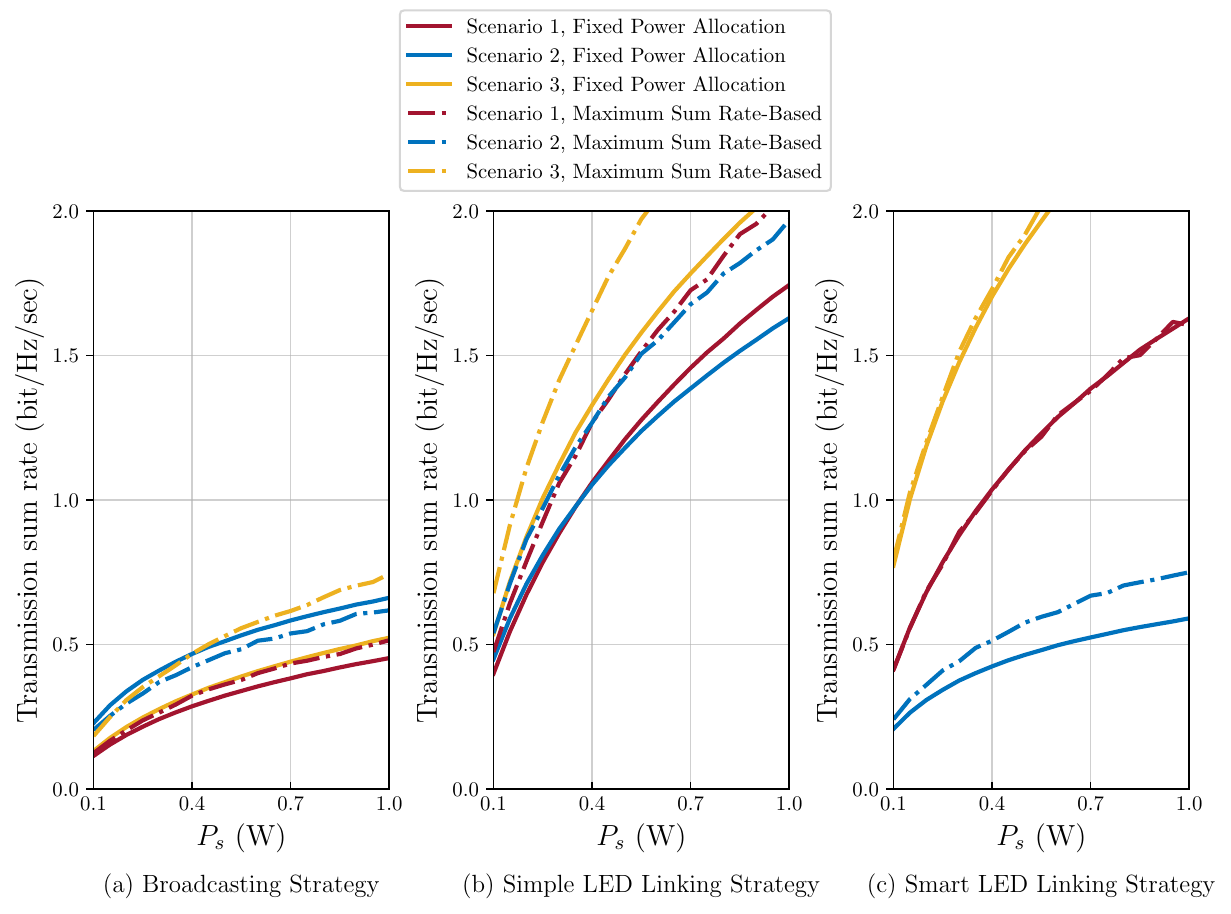}
    \caption{Transmission power versus transmission sum rate with the fixed power allocation scheme and maximum sum rate-based scheme.}
    \label{fig:sum_rate_opt}
\end{figure}

We next analyze the transmission performance of our model under the three strategies, 
considering both the maximum sum rate-based scheme and the fixed power allocation scheme.
The simulation results are depicted in~\cref{fig:sum_rate_opt}.
From the figure,
we can observe that
in~\cref{fig:sum_rate_opt}(a), 
the transmission performance
in
the broadcasting strategy
for Scenario~2 is poor, and this is due to the fact
that the maximum sum rate-based scheme
is only based on the expected value of 
the radiation and azimuth angles, and
the power allocation calculation process will 
cause severe errors when
the users are densely located.
Besides,
the maximum sum rate-based scheme works well 
with the two LED linking strategies.
As shown in
Figs.~\ref{fig:sum_rate_opt}(b)~and~\ref{fig:sum_rate_opt}(c), the values of the transmission sum rates for this scheme are basically higher than that for the fixed power allocation scheme
based on the sum rate maximization of the legitimate users.
Furthermore,
we can observe that in~Fig.~\ref{fig:sum_rate_opt}(c),
the maximum sum rate-based scheme in
Scenario~1 and Scenario~3
does not significantly improve 
the performance
due to the fact that
the locations of users are sparser than those in Scenario~2,
i.e.,
each ${\cal L}_n$ contains only a single or two users.
From these observations,
we may conclude that under a realistic VLC system where the light path could be blocked, the newly proposed strategies
(simple and smart LED linking strategies) 
provide better gain for the transmission sum rate compared to the broadcasting strategy.
\begin{figure}[t]
    \centering
    \includegraphics[clip,width=\columnwidth, bb=0 0 584 438]{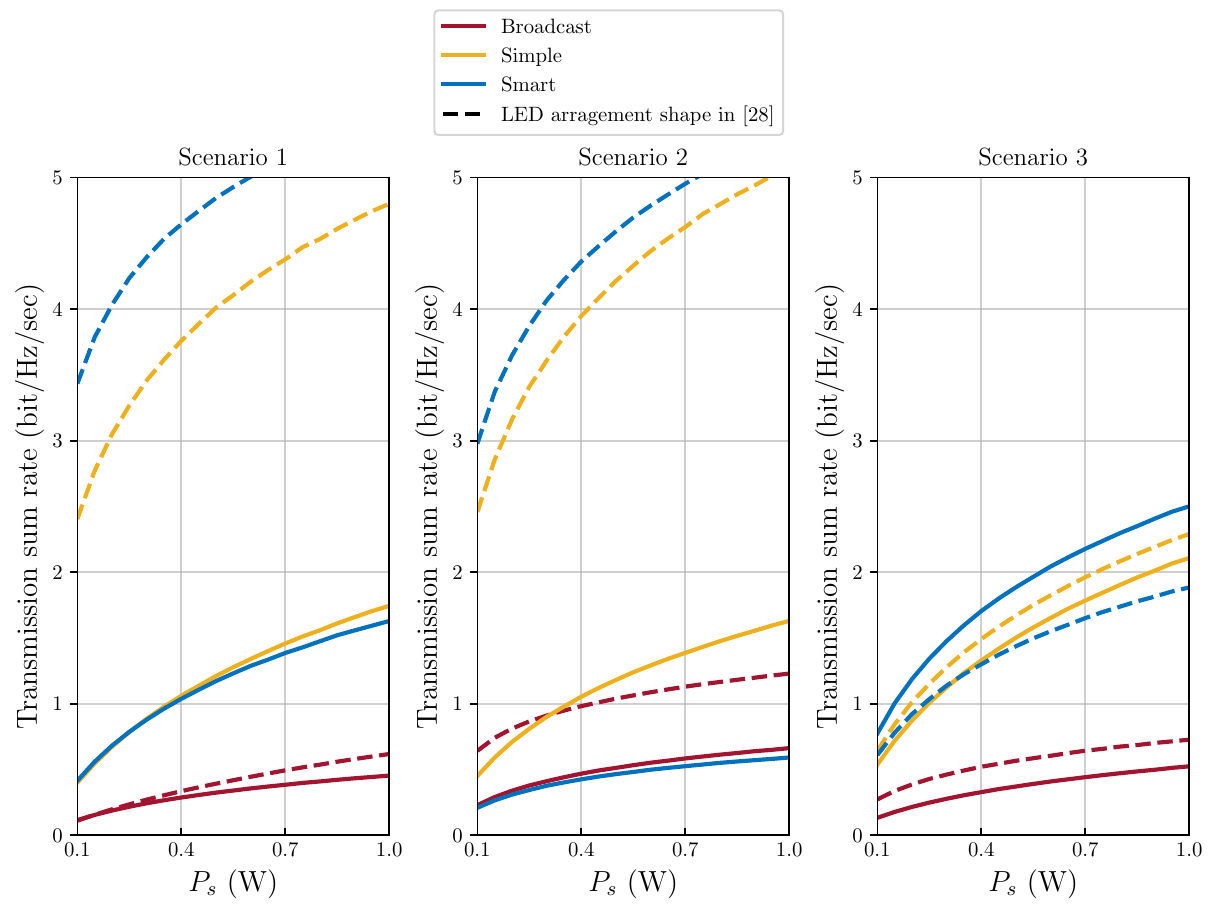}
    \caption{Transmission power versus transmission sum rate with fixed power allocation scheme in the proposed LED arrangement and the LED arrangement in \cite{zhang_2016_grouping} with the same total transmission power.}
    \label{fig:arrangement}
\end{figure}

Additionally,
in~\cref{fig:arrangement},
we compare the transmission sum rate between our proposed LED arrangement and the squared LED arrangement~\cite{zhang_2016_grouping} with the same sidelength 
and total transmission power.
From this figure, we observe that in Scenario~3, 
our proposed scenario outperforms the LED arrangement 
in~\cite{zhang_2016_grouping},
but in Scenarios~1~and~2, 
the transmission sum rate for our proposed arrangement is inferior to that of the arrangement method in~\cite{zhang_2016_grouping}.
This can be attributed to the fact that the squared LED arrangement increases the overlapping area of two geometrically adjacent LEDs, resulting in an increased illuminance for each user. 
If the two adjacent different sources transmit the same symbol simultaneously, 
the transmission performance can be improved.
It is worth noting that the transmission performance of the LED arrangements is also significantly influenced by the locations of users, 
which requires further investigation.

\begin{figure}[!t]
    \centering
    \includegraphics[clip,width=\columnwidth, bb=0 0 503 385]{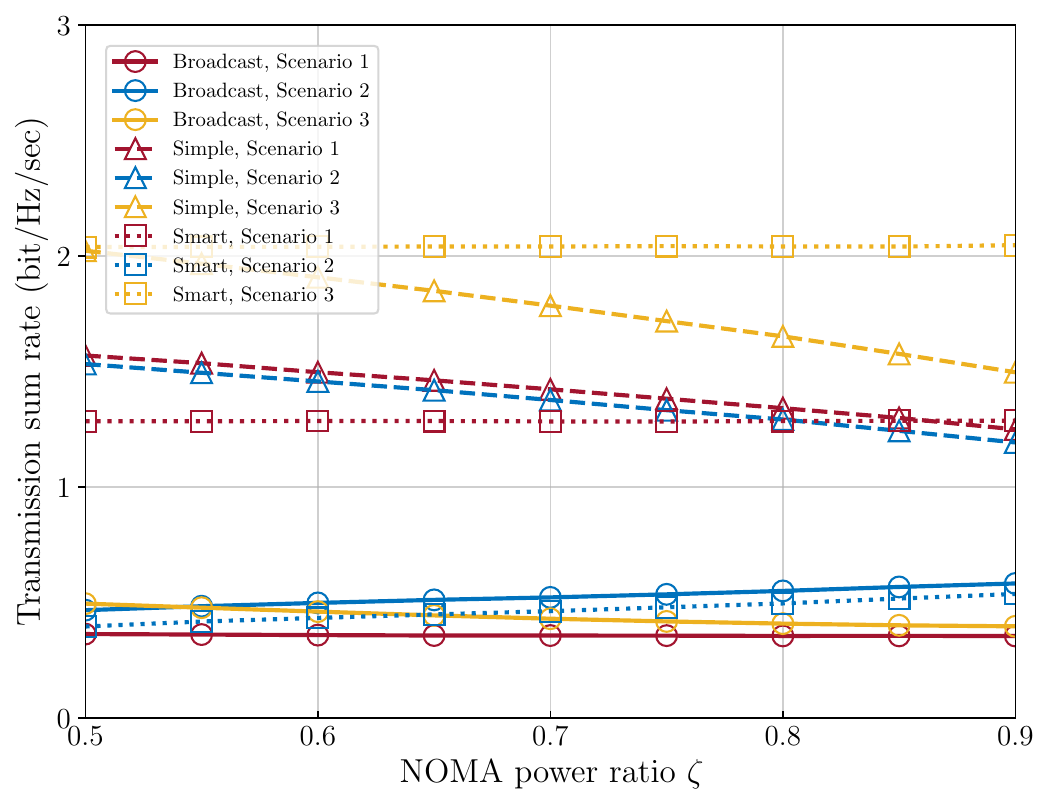}
    \caption{NOMA power ratio $\zeta$ versus transmission sum rate with different LED arrangements and transmission power $P_s = 0.6$~W.}
    \label{fig:zeta}
\end{figure}

Conversely, to observe the effects of the NOMA power ratio $\zeta$, the transmission performance is plotted with respect to $\zeta$ in Fig.~\ref{fig:zeta}.
Notably, the transmission sum rates for the broadcasting strategy and smart LED linking strategy remain constant,
while the simple LED linking strategy experiences a significant decrease as $\zeta$ increases.
The reason for this phenomenon is that,
in the simple LED linking strategy,
users may receive signals from two different adjacent sources based on their location.
The power of the interference decreases with the increase of $\zeta$.

\subsection{Secrecy Performance}

\begin{figure}[!ht]
    \centering
    \subfigure[Scenario~1]{
        \includegraphics[clip,width=.75\columnwidth, bb=0 0 453 424]{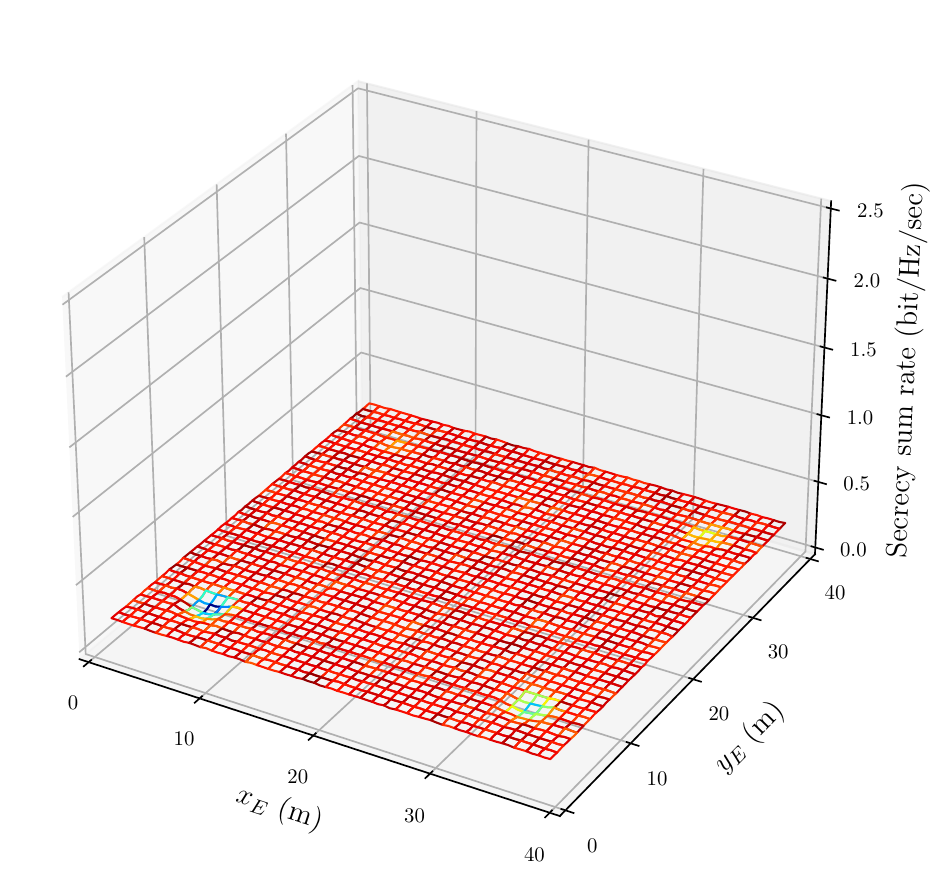}
        \label{fig:sec_3d_1_1}
    }
    \subfigure[Scenario~2]{
        \includegraphics[clip,width=.75\columnwidth, bb=0 0 453 424]{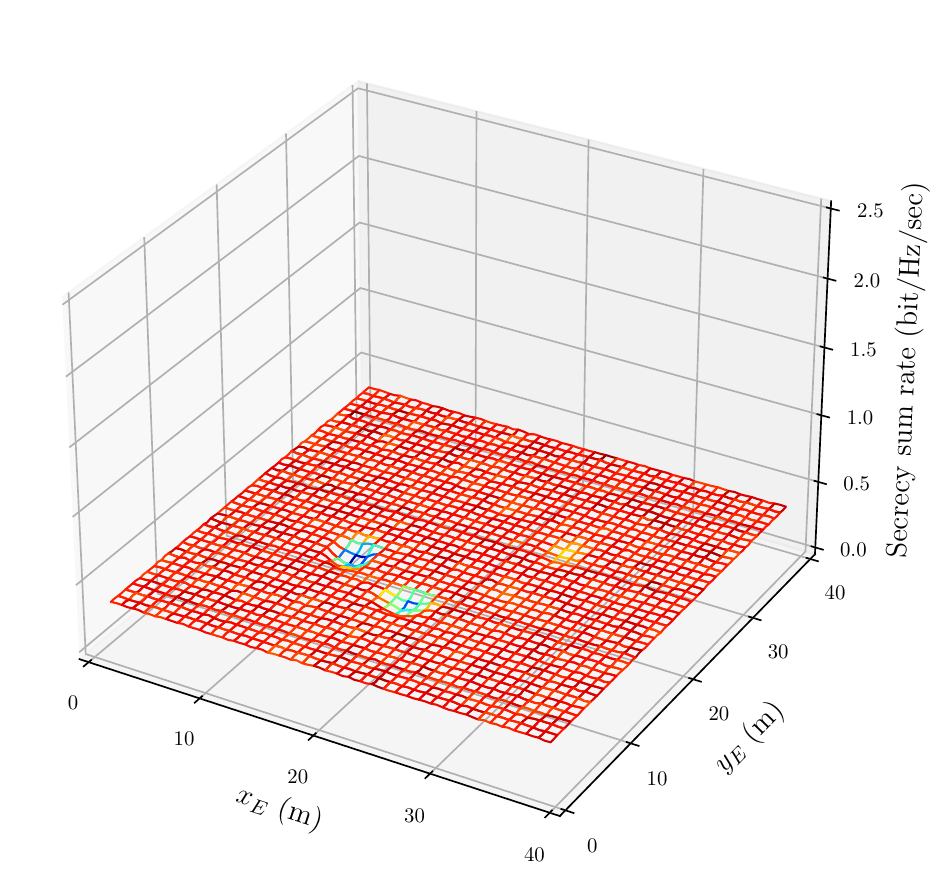}
        \label{fig:sec_3d_1_2}
    }
    \subfigure[Scenario~3]{
        \includegraphics[clip,width=.75\columnwidth, bb=0 0 453 424]{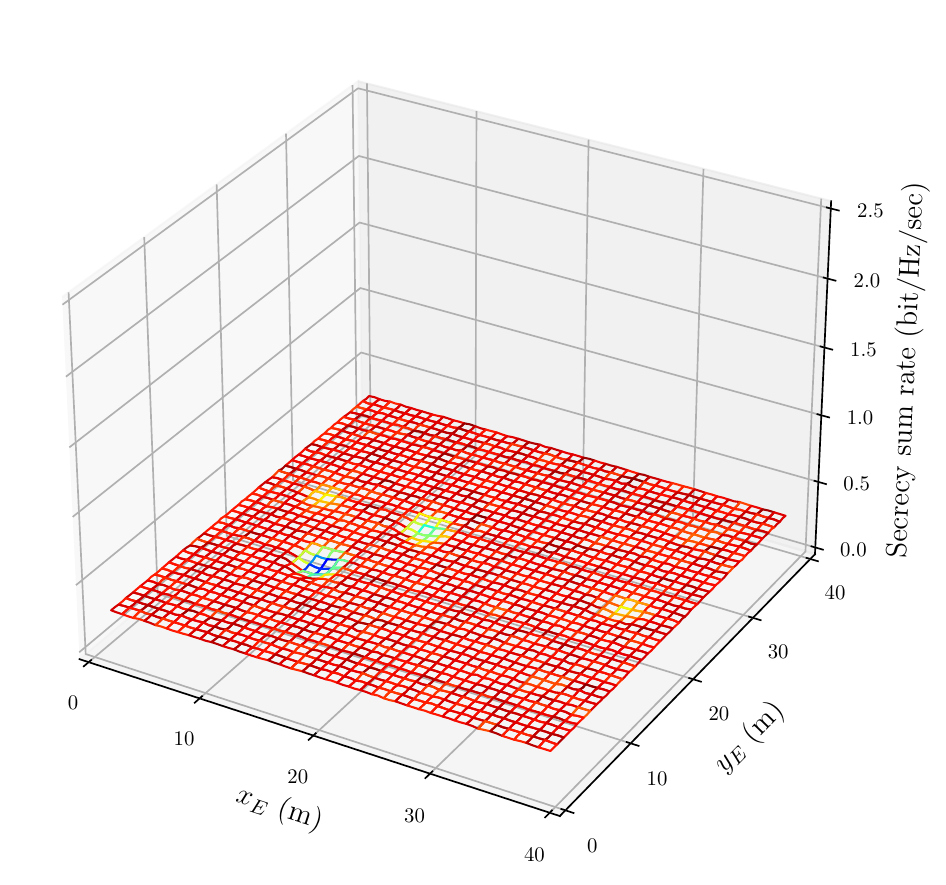}
        \label{fig:sec_3d_1_3}
    }
    \caption{Three-dimensional plots of XOY location of the eavesdropper versus secrecy sum rate
    with the fixed power allocation scheme in the broadcasting strategy.}
    \label{fig:sec_3d_1}
\end{figure}
\begin{figure}[!ht]
    \centering
    \subfigure[Scenario~1]{
        \includegraphics[clip,width=.75\columnwidth, bb=0 0 453 424]{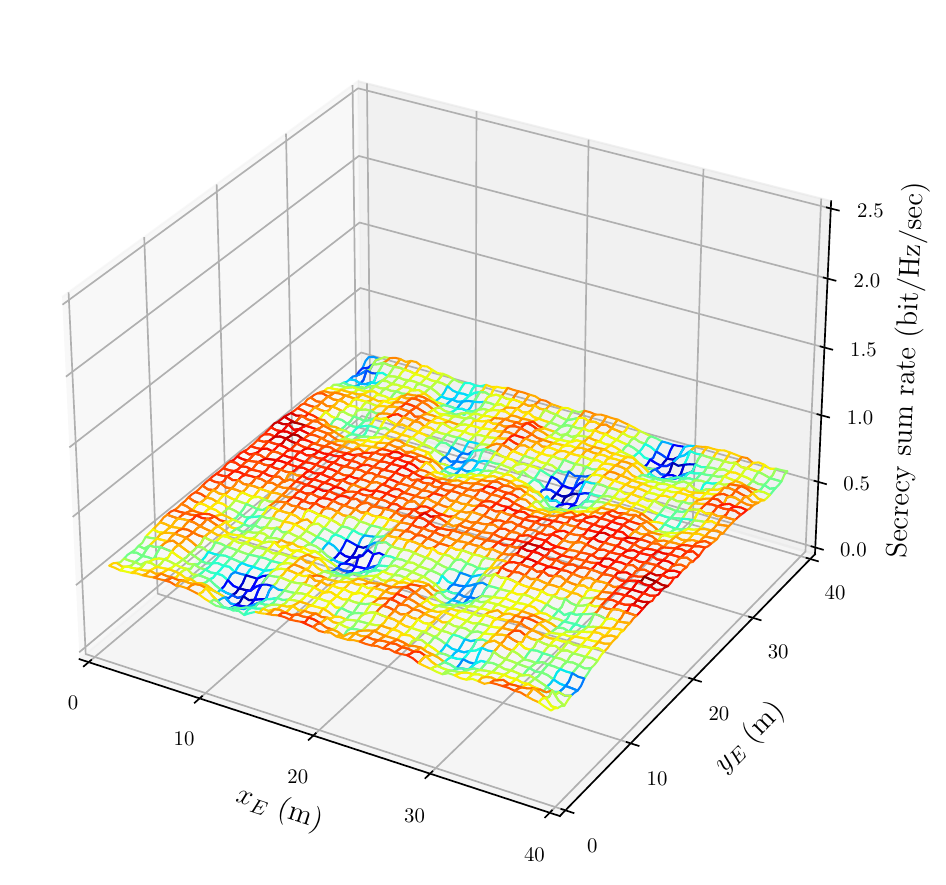}
        \label{fig:sec_3d_2_1}
    }
    \subfigure[Scenario~2]{
        \includegraphics[clip,width=.75\columnwidth, bb=0 0 453 424]{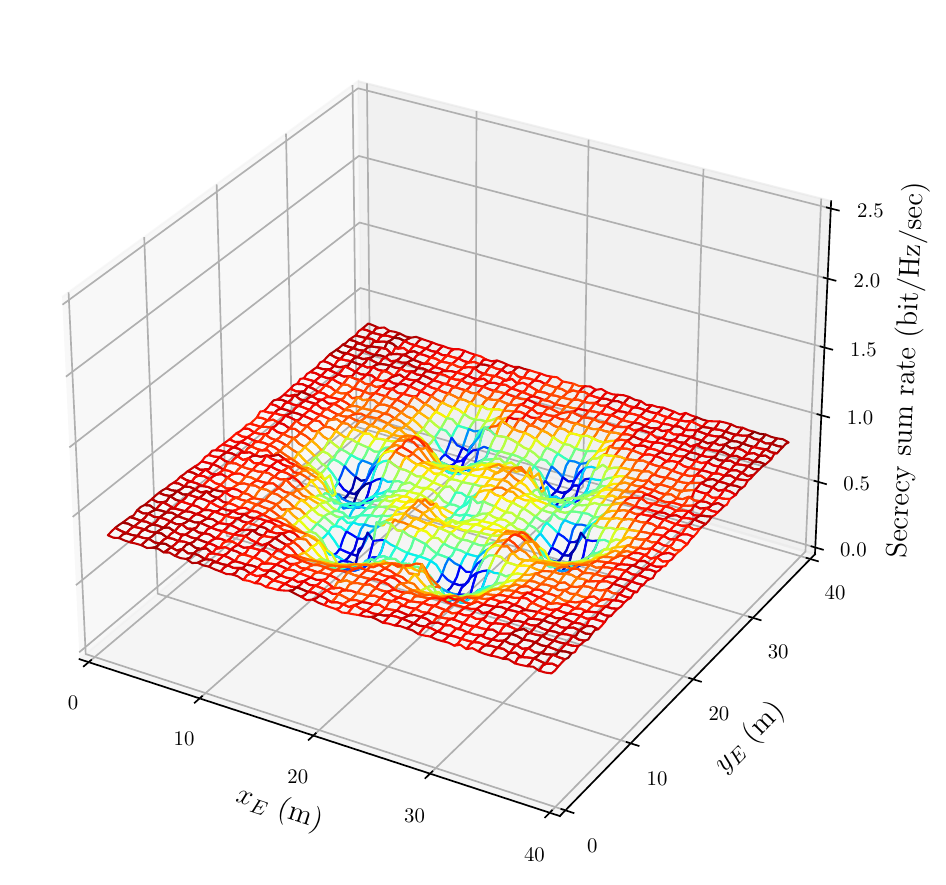}
        \label{fig:sec_3d_2_2}
    }
    \subfigure[Scenario~3]{
        \includegraphics[clip,width=.75\columnwidth, bb=0 0 453 424]{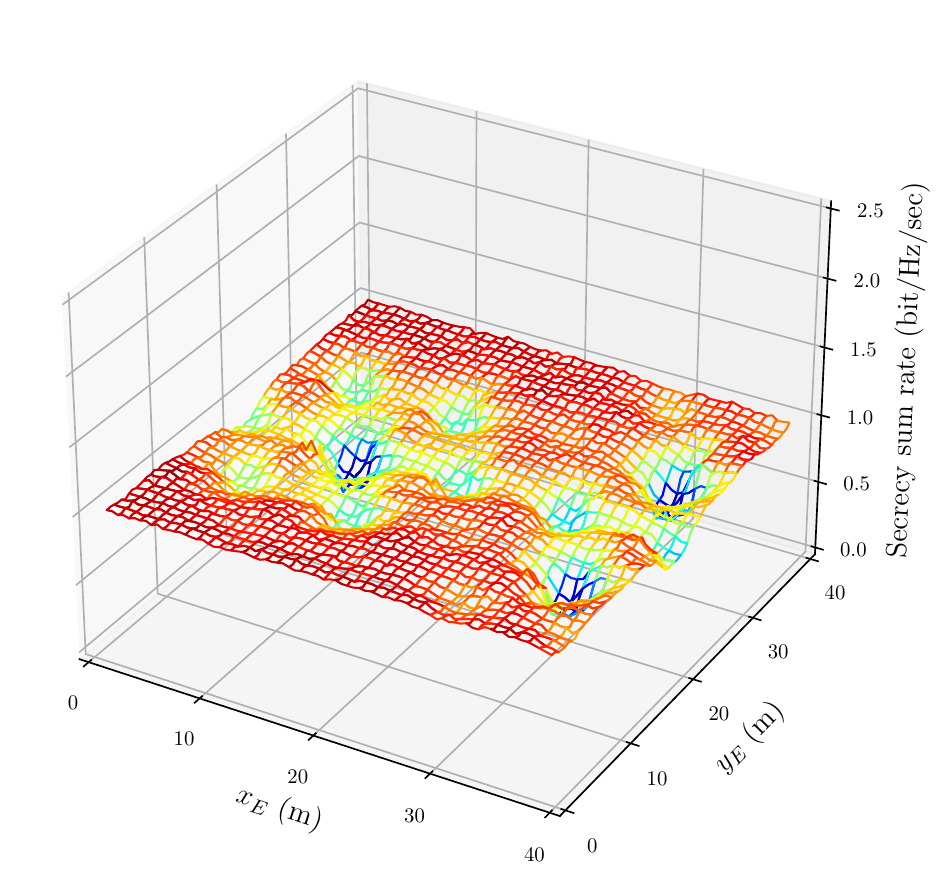}
        \label{fig:sec_3d_2_3}
    }
    \caption{Three-dimensional plots of XOY location of the eavesdropper versus secrecy sum rate
    with the fixed power allocation scheme 
    in the simple LED linking strategy.}
    \label{fig:sec_3d_2}
\end{figure}
\begin{figure}[!ht]
    \centering
    \subfigure[Scenario~1]{
        \includegraphics[clip,width=.75\columnwidth, bb=0 0 453 424]{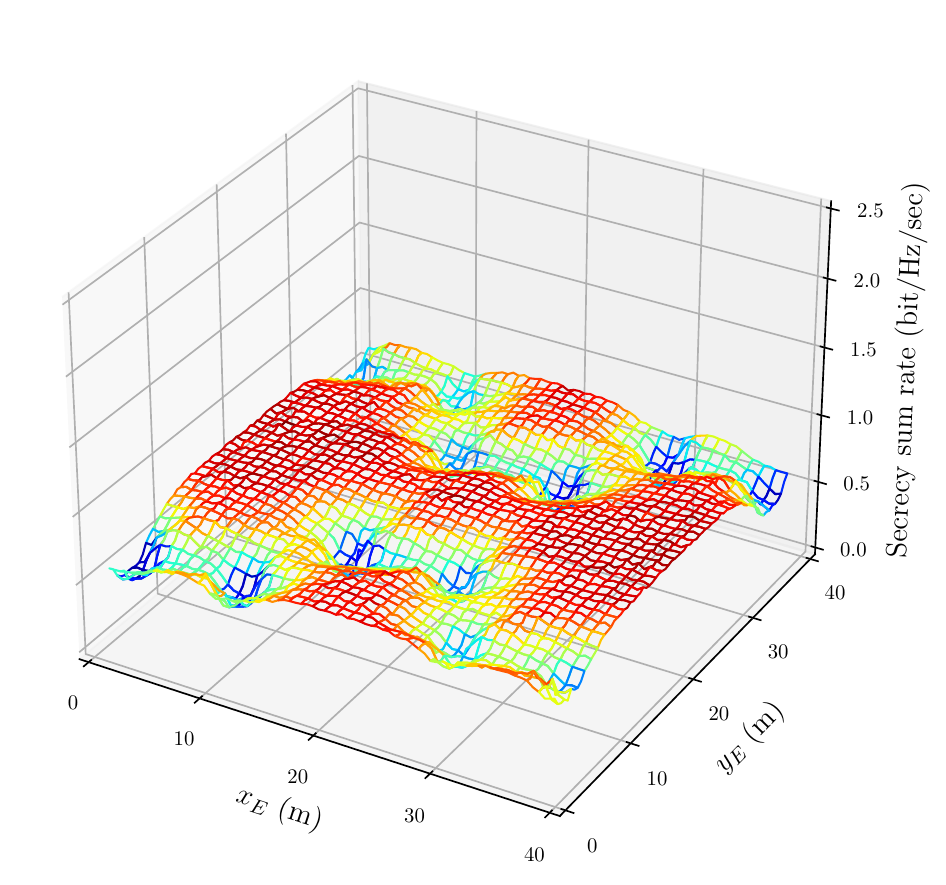}
        \label{fig:sec_3d_3_1}
    }
    \subfigure[Scenario~2]{
        \includegraphics[clip,width=.75\columnwidth, bb=0 0 453 424]{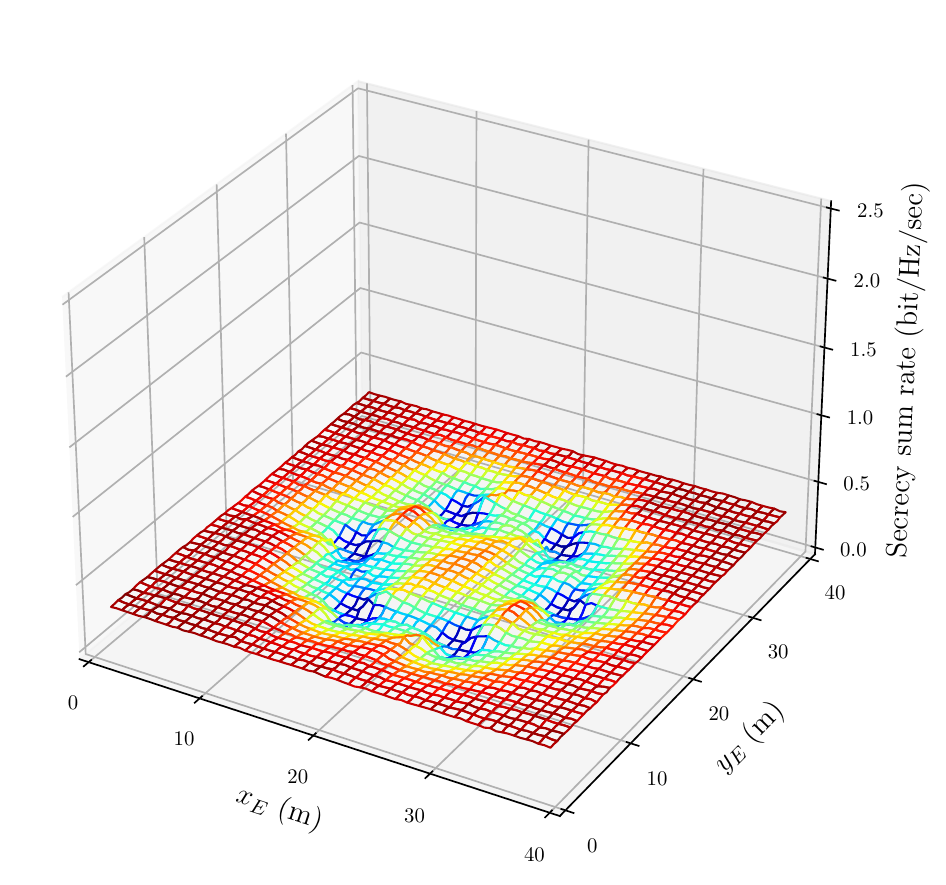}
        \label{fig:sec_3d_3_2}
    }
    \subfigure[Scenario~3]{
        \includegraphics[clip,width=.75\columnwidth, bb=0 0 453 424]{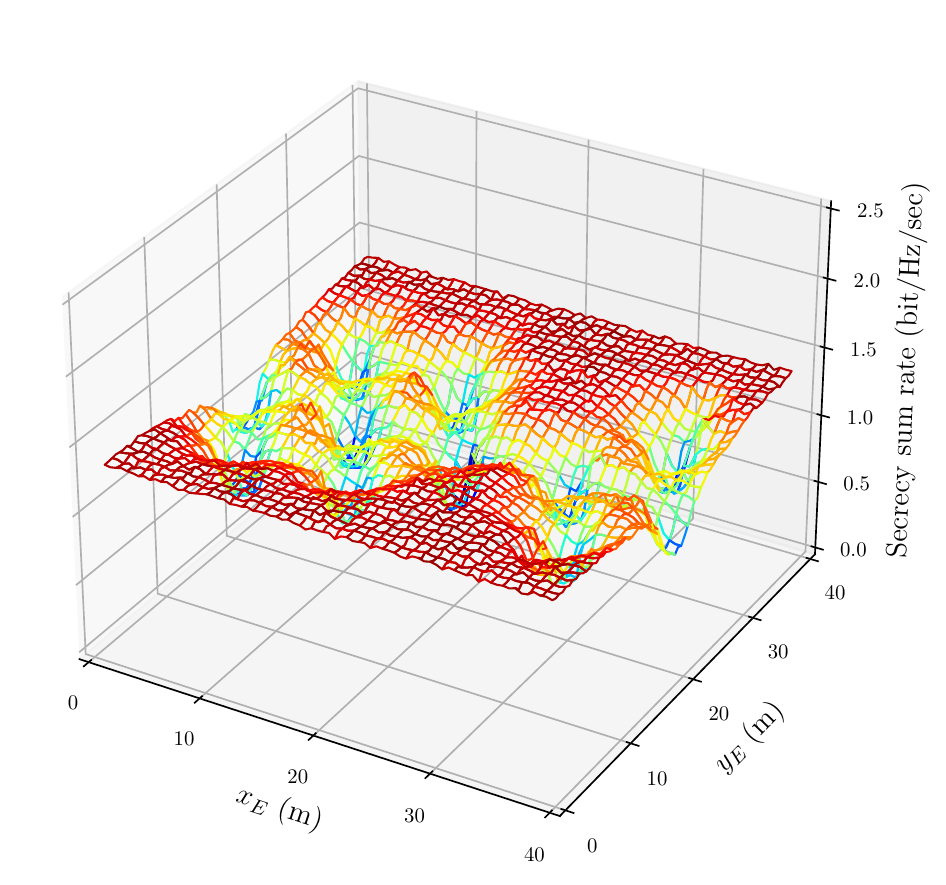}
        \label{fig:sec_3d_3_3}
    }
    \caption{Three-dimensional plots of XOY location of the eavesdropper versus secrecy sum rate
    with the fixed power allocation scheme
    in the smart LED linking strategy.}
    \label{fig:sec_3d_3}
\end{figure}

The secrecy sum rate and the location of the eavesdropper of our system model are examined with the three transmission strategies under the fixed power allocation scheme. 
The simulation results for the broadcasting, simple LED linking, and smart LED linking strategies are shown in Figs.~\ref{fig:sec_3d_1}, \ref{fig:sec_3d_2}, and \ref{fig:sec_3d_3},
respectively.
From these figures,
it is evident that the broadcasting strategy has the lowest secrecy sum rate.
The secrecy performances achieved by the simple as well as 
smart LED linking strategies rely on the distribution of the users.
More precisely, 
for Scenario~2, 
corresponding to the case 
where the users are in a small area with high concentration, 
the simple LED linking strategy provides better secrecy performance 
than the smart one 
(c.f.,
Figs.~\ref{fig:sec_3d_2_2}~and~\ref{fig:sec_3d_3_2}). 
The reason for this behavior 
is that the LEDs 
have the grouping process 
before transmitting data to users
in the smart LED linking strategy in order to bypass signal interference 
in the overlapping regions.
When all the users are 
densely located,
they may be assigned to the same group. 
Therefore, only a subset of the entire LEDs is selected to transmit signals to the legitimate users, 
and the remaining LEDs
make
no contribution to increasing the secrecy sum rate.
In different circumstances,
if some users are sparsely distributed as in Scenario~3,
the smart LED linking strategy
provides a higher secrecy sum rate than the simple one, 
which can easily be seen by comparing
Fig.~\ref{fig:sec_3d_2_3} and Fig.~\ref{fig:sec_3d_3_3}. 
This is because the eavesdropper
cannot wiretap the messages of all the users,
and the signal interference 
from different LEDs is also mitigated, 
resulting in boosting the secrecy sum rate of VLC systems.

\section{Conclusion}
\label{sec:conclusion}

In this work, 
we investigated the transmission and the secrecy performances of the indoor VLC networks 
for mobile devices equipped with a single PD under NOMA transmission.
Specifically, we proposed a geometric body blockage model with a novel LED arrangement and evaluated our proposed model with  three different transmission strategies, i.e., broadcasting, 
simple LED linking, and smart LED linking strategies. 
As a result,
compared to the conventional broadcasting strategy,
our proposed
\emph{simple} and \emph{smart LED linking strategies}
provided significant improvements in terms of transmission and secrecy performances 
in the simulation results.

For future work, 
the LED arrangement for the different users' locations needs further investigation,
and
it may be of significant interest to derive explicit theoretical bounds of the transmission sum rate and the secrecy sum rate for the body blockage model.

\appendices
\numberwithin{equation}{section} 
\section{Proof of Lemma~\ref{lemma:1}}
\label{sec:appendix}

With reference to \cref{fig:lemma1}, 
a vector on the plane $\mathcal{W}$ and its normal vector ${\bm v}_s$ are orthogonal and thus
\begin{align}
    \label{eq:a1}
    \overrightarrow{NP} \cdot {\bm v}_s = 0.
\end{align} 

The straight line that has the directional vector ${\bm v}_{t}$ and pass through the point $M$ can be expressed as
\begin{align}
    \label{eq:a2}
    \overrightarrow{OP} = \overrightarrow{OM} + \alpha {\bm v}_{t},
\end{align}
where $O$ denotes the origin and $\alpha \in \RR$. 

Substituting \cref{eq:a2} into \cref{eq:a1},
we have
\begin{align*}
    &((\overrightarrow{OM} + \alpha {\bm v}_{t}) - \overrightarrow{ON}) \cdot {\bm v}_s = 0,
\end{align*}
and thus
\begin{align}
    \alpha = \frac{(\overrightarrow{ON}-\overrightarrow{OM}) \cdot {\bm v}_s}{{\bm v}_{t} \cdot {\bm v}_s} = \frac{\overrightarrow{MN} \cdot {\bm v}_s}{{\bm v}_{s} \cdot {\bm v}_t}. 
    \label{eq:a4}
\end{align}
Now substitution of \cref{eq:a4} into \cref{eq:a2} yields that
\begin{align*}
    &\overrightarrow{OP} = \overrightarrow{OM} + \frac{\overrightarrow{MN} \cdot {\bm v}_s}{{\bm v}_s \cdot {\bm v}_{{t}}} {\bm v}_{{t}},
\end{align*}
i.e.,
\begin{align}
    \overrightarrow{MP} = \frac{\overrightarrow{MN} \cdot {\bm v}_s}{{\bm v}_s \cdot {\bm v}_{{t}}} {\bm v}_{{t}},
\end{align}
which completes the proof of Lemma 1. \qed

\bibliographystyle{IEEEtran}
\bibliography{IEEEabrv,shen_citation}

\begin{thebibliography}{10}
\providecommand{\url}[1]{#1}
\csname url@samestyle\endcsname
\providecommand{\newblock}{\relax}
\providecommand{\bibinfo}[2]{#2}
\providecommand{\BIBentrySTDinterwordspacing}{\spaceskip=0pt\relax}
\providecommand{\BIBentryALTinterwordstretchfactor}{4}
\providecommand{\BIBentryALTinterwordspacing}{\spaceskip=\fontdimen2\font plus
\BIBentryALTinterwordstretchfactor\fontdimen3\font minus
  \fontdimen4\font\relax}
\providecommand{\BIBforeignlanguage}[2]{{%
\expandafter\ifx\csname l@#1\endcsname\relax
\typeout{** WARNING: IEEEtran.bst: No hyphenation pattern has been}%
\typeout{** loaded for the language `#1'. Using the pattern for}%
\typeout{** the default language instead.}%
\else
\language=\csname l@#1\endcsname
\fi
#2}}
\providecommand{\BIBdecl}{\relax}
\BIBdecl

\bibitem{chi_2020_vlc}
N.~Chi, Y.~Zhou, Y.~Wei, and F.~Hu, ``Visible light communication in {6G}:
  Advances, challenges, and prospects,'' \emph{{IEEE} Veh. Technol. Mag.},
  vol.~15, no.~4, pp. 93--102, Dec. 2020.

\bibitem{ndjiongue_vlc_2015}
A.~R. Ndjiongue, H.~C. Ferreira, and T.~M.~N. Ngatched, ``Visible light
  communication ({VLC}) technology,'' in \emph{Wiley Encyclopedia Elect.
  Electron. Eng.}\hskip 1em plus 0.5em minus 0.4em\relax John Wiley \& Sons,
  Ltd, 2015, pp. 1 -- 15.

\bibitem{saito_2013_nonorthogonal}
Y.~Saito, Y.~Kishiyama, A.~Benjebbour, T.~Nakamura, A.~Li, and K.~Higuchi,
  ``Non-orthogonal multiple access {(NOMA)} for cellular future radio access,''
  in \emph{Proc. IEEE 77th Veh. Technol. Conf. (VTC-Spring)}, Dresden, Germany,
  Jun. 2013, pp. 1--5.

\bibitem{yang_2017_ontheoptimality}
Z.~Yang, W.~Xu, C.~Pan, Y.~Pan, and M.~Chen, ``On the optimality of power
  allocation for {NOMA} downlinks with individual {QoS} constraints,''
  \emph{{IEEE} Commun. Lett.}, vol.~21, no.~7, pp. 1649--1652, Jul. 2017.

\bibitem{shen_2021_auav-enabled}
T.~Shen and H.~Ochiai, ``A {UAV}-enabled wireless powered sensor network based
  on {NOMA} and cooperative relaying with altitude optimization,'' \emph{IEEE
  Open J. Commun. Soc.}, vol.~2, pp. 21--34, 2021.

\bibitem{akbar_2021_noma}
A.~Akbar, S.~Jangsher, and F.~A. Bhatti, ``{NOMA} and {5G} emerging
  technologies: A survey on issues and solution techniques,'' \emph{Comput.
  Netw.}, vol. 190, p. 107950, 2021.

\bibitem{dai_noma_2018}
L.~Dai, B.~Wang, Z.~Ding, Z.~Wang, S.~Chen, and L.~Hanzo, ``A survey of
  non-orthogonal multiple access for {5G},'' \emph{{IEEE} Commun. Surveys
  Tuts.}, vol.~20, no.~3, pp. 2294--2323, 3rd Quart. 2018.

\bibitem{choi_2016_allocation}
J.~Choi, ``On the power allocation for {MIMO-NOMA} systems with layered
  transmissions,'' \emph{{IEEE} Trans. Wireless Commun.}, vol.~15, no.~5, pp.
  3226--3237, May 2016.

\bibitem{cui_2016_outage}
J.~Cui, Z.~Ding, and P.~Fan, ``A novel power allocation scheme under outage
  constraints in {NOMA} systems,'' \emph{{IEEE} Signal Process. Lett.},
  vol.~23, no.~9, pp. 1226--1230, Sep. 2016.

\bibitem{wyner_1975_thewiretap}
A.~D. Wyner, ``The wire-tap channel,'' \emph{Bell Syst. Tech. J.}, vol.~54,
  no.~8, pp. 1355--1387, Oct. 1975.

\bibitem{bloch_2008_wireless}
M.~Bloch, J.~Barros, M.~R.~D. Rodrigues, and S.~W. McLaughlin, ``Wireless
  information-theoretic security,'' \emph{{IEEE} Trans. Inf. Theory}, vol.~54,
  no.~6, pp. 2515--2534, Jun. 2008.

\bibitem{liu_2017_physical}
Y.~Liu, H.-H. Chen, and L.~Wang, ``Physical layer security for next generation
  wireless networks: Theories, technologies, and challenges,'' \emph{{IEEE}
  Commun. Surveys Tuts.}, vol.~19, no.~1, pp. 347--376, 1st Quart. 2017.

\bibitem{hamamreh_2019_classifications}
J.~M. Hamamreh, H.~M. Furqan, and H.~Arslan, ``Classifications and applications
  of physical layer security techniques for confidentiality: A comprehensive
  survey,'' \emph{{IEEE} Commun. Surveys Tuts.}, vol.~21, no.~2, pp.
  1773--1828, 2nd Quart. 2019.

\bibitem{mukherjee_2014_prnciples}
A.~Mukherjee, S.~A.~A. Fakoorian, J.~Huang, and A.~L. Swindlehurst,
  ``Principles of physical layer security in multiuser wireless networks: A
  survey,'' \emph{{IEEE} Commun. Surveys Tuts.}, vol.~16, no.~3, pp.
  1550--1573, 3rd Quart. 2014.

\bibitem{zhao_2018_noma}
X.~Zhao, H.~Chen, and J.~Sun, ``On physical-layer security in multiuser visible
  light communication systems with non-orthogonal multiple access,''
  \emph{{IEEE} Access}, vol.~6, pp. 34\,004--34\,017, 2018.

\bibitem{peng_2021_pls}
H.~Peng, Z.~Wang, S.~Han, and Y.~Jiang, ``Physical layer security for {MISO}
  {NOMA} {VLC} system under eavesdropper collusion,'' \emph{{IEEE} Trans. Veh.
  Technol.}, vol.~70, no.~6, pp. 6249--6254, Jun. 2021.

\bibitem{shi_2022_secure}
G.~Shi, S.~Aboagye, T.~M.~N. Ngatched, O.~A. Dobre, Y.~Li, and W.~Cheng,
  ``Secure transmission in {NOMA}-aided multiuser visible light communication
  broadcasting network with cooperative precoding design,'' \emph{{IEEE} Trans.
  Inf. Forensics Secur.}, vol.~17, pp. 3123--3138, 2022.

\bibitem{li_2022_secrecy}
G.~Li, P.~Wang, T.~Yang, and H.~Che, ``Secrecy sum-rate enhancement for
  {NOMA-VLC} system with pseudo user,'' \emph{{IEEE} Commun. Lett.}, vol.~27,
  no.~1, pp. 243--247, Jan. 2023.

\bibitem{mostafa_physical-layer_2015}
A.~Mostafa and L.~Lampe, ``Physical-layer security for {MISO} visible light
  communication channels,'' \emph{{IEEE} J. Sel. Areas Commun.}, vol.~33,
  no.~9, pp. 1806--1818, Sep. 2015.

\bibitem{liu_2020_beamforming}
X.~Liu, Y.~Wang, F.~Zhou, S.~Ma, R.~Q. Hu, and D.~W.~K. Ng, ``Beamforming
  design for secure {MISO} visible light communication networks with {SLIPT},''
  \emph{{IEEE} Trans. Commun.}, vol.~68, no.~12, pp. 7795--7809, Dec. 2020.

\bibitem{cho_2021_cooperativebeamforming}
S.~Cho, G.~Chen, and J.~P. Coon, ``Cooperative beamforming and jamming for
  secure {VLC} system in the presence of active and passive eavesdroppers,''
  \emph{{IEEE} Trans. Green Commun. Netw.}, vol.~5, no.~4, pp. 1988--1998, Dec.
  2021.

\bibitem{arfaoui_secrecy_2018}
M.~A. Arfaoui, A.~Ghrayeb, and C.~M. Assi, ``Secrecy performance of multi-user
  {MISO} {VLC} broadcast channels with confidential messages,'' \emph{{IEEE}
  Trans. Wireless Commun.}, vol.~17, no.~11, pp. 7789--7800, Nov. 2018.

\bibitem{pham_energy_2021}
T.~V. Pham and A.~T. Pham, ``Energy efficient artificial noise-aided precoding
  designs for secured visible light communication systems,'' \emph{{IEEE}
  Trans. Wireless Commun.}, vol.~20, no.~1, pp. 653--666, Jan. 2021.

\bibitem{wang_optical_2018}
F.~Wang, C.~Liu, Q.~Wang, J.~Zhang, R.~Zhang, L.-L. Yang, and L.~Hanzo,
  ``Optical jamming enhances the secrecy performance of the generalized
  space-shift-keying-aided visible-light downlink,'' \emph{{IEEE} Trans.
  Commun.}, vol.~66, no.~9, pp. 4087--4102, Sep. 2018.

\bibitem{che_physical-layer_2018}
Z.~Che, J.~Fang, Z.~L. Jiang, J.~Li, S.~Zhao, Y.~Zhong, and Z.~Chen, ``A
  physical-layer secure coding scheme for indoor visible light communication
  based on polar codes,'' \emph{{IEEE} Photon. J.}, vol.~10, no.~5, pp. 1--13,
  Oct. 2018.

\bibitem{pan_secure_2017}
G.~Pan, J.~Ye, and Z.~Ding, ``On secure {VLC} systems with spatially random
  terminals,'' \emph{{IEEE} Commun. Lett.}, vol.~21, no.~3, pp. 492--495, Mar.
  2017.

\bibitem{cho_physical_2019}
S.~Cho, G.~Chen, and J.~P. Coon, ``Physical layer security in multiuser {VLC}
  systems with a randomly located eavesdropper,'' in \emph{Proc. IEEE Global
  Commun. Conf. (GLOBECOM)}, Waikoloa, HI, USA, Dec. 2019, pp. 1--6.

\bibitem{zhang_2016_grouping}
X.~Zhang, Q.~Gao, C.~Gong, and Z.~Xu, ``User grouping and power allocation for
  {NOMA} visible light communication multi-cell networks,'' \emph{{IEEE}
  Commun. Lett.}, vol.~21, no.~4, pp. 777--780, Apr. 2017.

\bibitem{singh_2021_vlc}
A.~Singh, G.~Ghatak, A.~Srivastava, V.~A. Bohara, and A.~K. Jagadeesan,
  ``Performance analysis of indoor communication system using off-the-shelf
  {LEDs} with human blockages,'' \emph{IEEE Open J. Commun. Soc.}, vol.~2, pp.
  187--198, 2021.

\bibitem{aboagye_2023_ris}
S.~Aboagye, A.~R. Ndjiongue, T.~M.~N. Ngatched, O.~A. Dobre, and H.~V. Poor,
  ``{RIS}-assisted visible light communication systems: A tutorial,''
  \emph{{IEEE} Commun. Surveys Tuts.}, vol.~25, no.~1, pp. 251--288, 1st Quart.
  2023.

\bibitem{beysens_2020_exploiting}
J.~Beysens, Q.~Wang, and S.~Pollin, ``Exploiting blockage in {VLC} networks
  through user rotations,'' \emph{IEEE Open J. Commun. Soc.}, vol.~1, pp.
  1084--1099, 2020.

\bibitem{shen_vlc_2022}
T.~Shen, V.~Yachongka, and H.~Ochiai, ``Secure {NOMA}-based indoor {VLC}
  networks with body blockage model,'' in \emph{Proc. IEEE Wireless Commun.
  Netw. Conf. (WCNC)}, Glasgow, Scotland, UK, Mar. 2023.

\bibitem{soltani_2019_modeling}
M.~D. Soltani, A.~A. Purwita, Z.~Zeng, H.~Haas, and M.~Safari, ``Modeling the
  random orientation of mobile devices: Measurement, analysis and {LiFi} use
  case,'' \emph{{IEEE} Trans. Commun.}, vol.~67, no.~3, pp. 2157--2172, Mar.
  2019.

\bibitem{goldsmith}
A.~Goldsmith, \emph{Wireless Communications}.\hskip 1em plus 0.5em minus
  0.4em\relax Cambridge, United Kingdom: Cambridge University Press, 2005.

\bibitem{komine_2005_shadowing}
T.~Komine, S.~Haruyama, and M.~Nakagawa, ``A study of shadowing on indoor
  visible-light wireless communication utilizing plural white {LED}
  lightings,'' \emph{Wireless Pers. Commun.}, vol.~34, pp. 211--225, 2005.

\bibitem{chen_2017_performance}
C.~Chen and H.~Haas, ``Performance evaluation of downlink coordinated
  multipoint joint transmission in {LiFi} systems,'' in \emph{Proc. IEEE Global
  Commun. Conf. Workshop (Globecom Workshop)}, Singapore, Dec. 2017, pp. 1--6.

\bibitem{kahn_1997_wireless}
J.~M. Kahn and J.~R. Barry, ``Wireless infrared communications,'' \emph{Proc.
  {IEEE}}, vol.~85, no.~2, pp. 265--298, Feb. 1997.

\bibitem{komine_2004_vlc}
T.~Komine and M.~Nakagawa, ``Fundamental analysis for visible-light
  communication system using {LED} lights,'' \emph{{IEEE} Trans. Consum.
  Electron.}, vol.~50, no.~1, pp. 100--107, Feb. 2004.

\bibitem{yin_2016_performance}
L.~Yin, W.~O. Popoola, X.~Wu, and H.~Haas, ``Performance evaluation of
  non-orthogonal multiple access in visible light communication,'' \emph{{IEEE}
  Trans. Commun.}, vol.~64, no.~12, pp. 5162--5175, Dec. 2016.

\bibitem{chi_2018_vlc}
N.~Chi, \emph{LED-Based Visible Light Communications}, ser. Signals and
  Communication Technology.\hskip 1em plus 0.5em minus 0.4em\relax Heidelberg,
  Germany: Springer Berlin, 2018.

\bibitem{li_2023_adaptive}
Q.~Li, T.~Shang, T.~Tang, and Z.~Xiong, ``Adaptive user association scheme for
  indoor multi-user {NOMA}-{VLC} systems,'' \emph{{IEEE} Wireless Commun.
  Lett.}, vol.~12, no.~5, pp. 873--877, May 2023.

\bibitem{kizilirmak_2015_perfectSIC}
R.~C. Kizilirmak, C.~R. Rowell, and M.~Uysal, ``Non-orthogonal multiple access
  ({NOMA}) for indoor visible light communications,'' in \emph{Proc. 4th Int.
  Workshop Opt. Wireless Commun. (IWOW)}, Istanbul, Turkey, Sep. 2015.

\bibitem{feng_2019_joint}
S.~Feng, T.~Bai, and L.~Hanzo, ``Joint power allocation for the multi-user
  {NOMA}-downlink in a power-line-fed {VLC} network,'' \emph{{IEEE} Trans. Veh.
  Technol.}, vol.~68, no.~5, pp. 5185--5190, May 2019.

\bibitem{pham_2017_secrecy}
T.~V. Pham and A.~T. Pham, ``Secrecy sum-rate of multi-user {MISO} visible
  light communication systems with confidential messages,'' \emph{Optik}, vol.
  151, pp. 65--76, Dec. 2017.

\bibitem{chen_2018_on}
C.~Chen, W.-D. Zhong, H.~Yang, and P.~Du, ``On the performance of
  {MIMO-NOMA}-based visible light communication systems,'' \emph{{IEEE} Photon.
  Technol. Lett.}, vol.~30, no.~4, pp. 307--310, Feb. 2018.

\bibitem{dogra_2022_journal}
T.~Dogra and M.~R. Bharti, ``User pairing and power allocation strategies for
  downlink {NOMA}-based {VLC} systems: An overview,'' \emph{AEU - Int. J.
  Electron. Commun.}, vol. 149, p. 154184, May 2022.

\bibitem{boyd_2004_convex}
S.~Boyd and L.~Vandenberghe, \emph{Convex Optimization}.\hskip 1em plus 0.5em
  minus 0.4em\relax New York, NY, USA: Cambridge University Press, 2004.

\bibitem{dunning_2017_jump}
I.~Dunning, J.~Huchette, and M.~Lubin, ``{JuMP}: A modeling language for
  mathematical optimization,'' \emph{{SIAM} Rev.}, vol.~59, no.~2, pp.
  295--320, 2017.

\bibitem{wachter_2006_ipopt}
A.~W\"{a}chter and L.~T. Biegler, ``On the implementation of an interior-point
  filter line-search algorithm for large-scale nonlinear programming,''
  \emph{Math. Programm.}, vol. 106, no.~1, pp. 25--57, Mar. 2006.

\end{thebibliography}

\end{document}